\ifdefined\LONGVERSION{}
 \newcommand{\LONGVERSION}[1]{}
 \newcommand{\SHORTVERSION}[1]{#1}
\newcommand{\LONGSHORT}[2]{\LONGVERSION{#1}\SHORTVERSION{#2}}
\newcommand{\elevator}{\textsc{Elevator}\xspace}
\newcommand{\moebius}{\textsc{Moebius}\xspace}
\lstdefinelanguage{Elevator}%
{%
  keywords=%
  {%
    box, %
    thunk, force,%
    store,load,let,in,%
    Type,%
    PolyNat,polyZero,polySuc,%
    Nat,zero,suc,%
    List,nil,cons,%
    Array,%
    fun,%
    if,then,else,%
    match,with,private%
  },%
  morecomment=[l]{--},%
  sensitive,
}
\newcommand{\andothers}{\mathop{\ldots}\xspace}
\newcommand{\syntaxDef}{\mathbin{:=}\xspace}
\newcommand{\syntaxAlt}{\mathbin{\ \mid\ }\xspace}
\newcommand{\ms}[1]{\mathsf{#1}}
\newcommand{\mmode}[1]{{\mathchoice{\ms{#1}}{\ms{#1}}{\scriptscriptstyle\ms{#1}}{\scriptscriptstyle\ms{#1}}}}
\newcommand{\mGF}{\mmode{GF}}
\newcommand{\mP}{\mmode{P}}
\newcommand{\mC}{\mmode{C}}
\newcommand{\modeSet}{\mathcal{M}}
\newcommand{\modeSigSymb}{\mathcal{S}}
\newcommand{\modeSig}[1]{\modeSigSymb({#1})}
\newcommand{\modeCo}{\mathtt{C}}
\newcommand{\modeWk}{\mathtt{W}}
\newcommand{\ctxEmpty}{{\ \cdot\ }}
\newcommand{\ctxCons}[2]{#1 \mathbin{,} #2}
\newcommand{\subEmpty}{{\ \cdot\ }}
\newcommand{\subCons}[2]{#1 \mathbin{,} #2}
\newcommand{\kiUpSymb}{{\uparrow}}
\newcommand{\kiTypeSymb}{{\mathtt{Type}}}
\newcommand{\kiCtxUp}[4]{{[#3\ \vdash\ #4]\kiUpSymb}^{#1}_{#2}}
\newcommand{\kiType}[1]{\kiTypeSymb_{#1}}
\newcommand{\tyBoxSymb}{{\Box}}
\newcommand{\tyUpSymb}{{\uparrow}}
\newcommand{\tyDownSymb}{{\downarrow}}
\newcommand{\tyTensorSymb}{\mathbin{\otimes}}
\newcommand{\tyPlusSymb}{\mathop{\oplus}}
\newcommand{\tyAndSymb}{\mathop{\&}}
\newcommand{\tyOneSymb}{{\mathtt{\mathbf{1}}}}
\newcommand{\tyLarrSymb}{\mathbin{\multimap}}
\newcommand{\tyArrSymb}{\mathbin{\rightarrow}}
\newcommand{\tyUp}[3]{#3 \tyUpSymb^{#1}_{#2}}
\newcommand{\tyCtxUp}[4]{{[#3\ \vdash{} \ #4] \tyUpSymb}^{#1}_{#2}}
\newcommand{\tyDown}[3]{#3 \tyDownSymb^{#1}_{#2}}
\newcommand{\tyOne}[1]{\tyOneSymb_{#1}}
\newcommand{\tyLarr}[2]{#1 \tyLarrSymb{} #2}
\newcommand{\tyArr}[2]{#1 \tyArrSymb{} #2}
\newcommand{\tyForall}[3]{(#1 {:} #2) \tyArrSymb{} #3}
\newcommand{\tyThunk}[1]{\ulcorner{} #1 \urcorner}
\newcommand{\tyCtxThunk}[2]{\tyThunk{#1\ .\ #2}}
\newcommand{\tyForce}[1]{\llcorner{} #1 \lrcorner}
\newcommand{\tyForceWith}[2]{\tyForce{#1}@#2}
\newcommand{\tmInName}{{\mathsf{in}}}
\newcommand{\tmSusp}[1]{\ulcorner{} #1 \urcorner}
\newcommand{\tmCtxSusp}[2]{\tmSusp{#1\ .\ #2}}
\newcommand{\tmCtxSuspElab}[4]{\tmSusp{#3\ .\ #4}^{#1}_{#2}}
\newcommand{\tmForce}[1]{\llcorner{} #1 \lrcorner}
\newcommand{\tmForceWith}[2]{\tmForce{#1}@#2}
\newcommand{\tmForceWithElab}[4]{\tmForce{#3}^{#1}_{#2}@#4}
\newcommand{\tmStoreName}{{\mathsf{store}}}
\newcommand{\tmStore}[1]{\tmStoreName\ (#1)}
\newcommand{\tmStoreElab}[3]{\tmStoreName^{#1}_{#2}\ (#3)}
\newcommand{\tmLoadName}{{\mathsf{load}}}
\newcommand{\tmLoad}[3]{\tmLoadName\ (#1) = #2\ \tmInName\ #3}
\newcommand{\tmLoadElab}[5]{\tmLoadName^{#1}_{#2}\ (#3) = #4\ \tmInName\ #5}
\newcommand{\tmOne}{{\mathsf{one}}}
\newcommand{\tmTLam}[3]{\Lambda{} (#1 {:} #2) . #3}
\newcommand{\tmLam}[3]{\lambda{} (#1 {:} #2) . #3}
\newcommand{\tmTApp}[2]{#1\mathbin{}#2}
\newcommand{\tmApp}[2]{#1\mathbin{}#2}
\newcommand{\ctxMerge}[2]{#1 \bowtie{} #2}
\newcommand{\eraseForSub}[1]{{(#1)}^{-}}
\newcommand{\sub}[2]{[#1]#2}
\newcommand{\subNorm}[3]{{[#1]}^{#2}#3}
\newcommand{\subNeut}[3]{{[#1]}^{#2}_{\mathtt{R}}#3}
\newcommand{\subMerge}[2]{#1 \bowtie{} #2}
\newcommand{\ctxCls}{\ \mathtt{ctx}}
\newcommand{\kindCls}{\ \mathtt{kind}}
\newcommand{\stepto}{\mathbin{\longrightarrow}}
\newcommand{\refinestepto}[1]{\mathbin{\rightsquigarrow^{#1}}}
\newcommand{\loadStoreRed}{\mathsf{load}{-}\mathsf{store}}
\newcommand{\forceSuspRed}{\mathsf{force}{-}\mathsf{susp}}
\begin{document}

% \title{\elevator: Polymorphic Metaprogramming with Memory
% Management
\title{Polymorphic Metaprogramming with Memory Management}
\subtitle{An Adjoint Analysis of Metaprogramming}

\author{Junyoung Jang}

\affiliation{
  \department{School of Computer Science} %% \department is recommended
  \institution{McGill University}         %% \institution is required
  \country{Canada}                    %% \country is recommended
}
\email{junyoung.jang@mail.mcgill.ca}          %% \email is recommended

\author{Brigitte Pientka}

\affiliation{
	\department{School of Computer Science} %% \department is recommended
	\institution{McGill University}         %% \institution is required
	\country{Canada}                    %% \country is recommended
}
\email{bpientka@cs.mcgill.ca}         %% \email is recommended

% LTeX: enabled=true
%% Abstract
%% Note: \begin{abstract}...\end{abstract} environment must come
%% before \maketitle command

\begin{abstract}
We describe \elevator, a unifying polymorphic foundation for metaprogramming
with memory management based on adjoint modalities.
In this setting, we distinguish between multiple memory regions using
\emph{modes} where each mode has a specific set of structural
properties. This allows us not only to capture linear (i.e.\ garbage-free) memory
regions and (ordinary) intuitionistic (i.e.\ garbage-collected or
persistent) memory regions, but also to capture accessibility between the
memory regions using a preorder between modes. This preorder
gives us the power to describe monadic and comonadic programming. As a
consequence, it extends the existing logical view of metaprogramming
in two directions: first, it ensures that code generation can be
done efficiently by controlling memory accesses; second, it allows us to provide
resource guarantees about the generated code (i.e.\ code that is for
example garbage-free).

%It naturally arises from the natural deduction system of adjoint logic
% where modalities are arranged in a preorder.

% From a
% metaprogramming perspective, this not only
% allows us to cleaner distinction between the program that is
% generating code, the pointer to the code that is being generated, and
% the generated code itself, but also allows us  to describe and check
% resource and privacy constraints of the generated code. 
We present the static and dynamic semantics of \elevator.
In particular, we prove the substructurality of
variable references and type safety of the language.
We also establish mode safety, which guarantees that the evaluation of a term
does not access a value in an inaccessible memory.
\end{abstract}

% LTeX: enabled=false

%% 2012 ACM Computing Classification System (CSS) concepts
%% Generate at 'http://dl.acm.org/ccs/ccs.cfm'.
\begin{CCSXML}
<ccs2012>
   <concept>
       <concept_id>10003752.10003790.10003801</concept_id>
       <concept_desc>Theory of computation~Linear logic</concept_desc>
       <concept_significance>500</concept_significance>
       </concept>
   <concept>
       <concept_id>10003752.10003790.10003793</concept_id>
       <concept_desc>Theory of computation~Modal and temporal logics</concept_desc>
       <concept_significance>500</concept_significance>
       </concept>
   <concept>
       <concept_id>10011007.10011006.10011039</concept_id>
       <concept_desc>Software and its engineering~Formal language definitions</concept_desc>
       <concept_significance>500</concept_significance>
       </concept>
   <concept>
       <concept_id>10011007.10011006.10011008.10011009.10011012</concept_id>
       <concept_desc>Software and its engineering~Functional languages</concept_desc>
       <concept_significance>500</concept_significance>
       </concept>
   <concept>
       <concept_id>10011007.10011006.10011041.10010943</concept_id>
       <concept_desc>Software and its engineering~Interpreters</concept_desc>
       <concept_significance>500</concept_significance>
       </concept>
 </ccs2012>
\end{CCSXML}

\ccsdesc[500]{Theory of computation~Linear logic}
\ccsdesc[500]{Theory of computation~Modal and temporal logics}
\ccsdesc[500]{Software and its engineering~Formal language definitions}
\ccsdesc[500]{Software and its engineering~Functional languages}
\ccsdesc[500]{Software and its engineering~Interpreters}

\keywords{Metaprogramming, Modal Types, Type Systems, Substructural Logic, Resource-aware Programming}

\maketitle{}

% LTeX: enabled=true
\section{Introduction}\label{sec:intro}
Metaprogramming is a commonly used technique for generating or
manipulating a program within another program. This allows programmers
to automate error-prone or repetitive tasks and exploit
domain-specific knowledge to optimize the generated code.
Unfortunately, writing safe metaprograms remains very challenging and
is sometimes frustrating as, traditionally, errors in the generated code
are only detected when running it but not at the time when the code is
generated. 

The state-of-the-art in most existing languages \--- from typed
programming languages such as Haskell~\cite{Sheard:Haskell02} to
dependently typed languages such as Coq~\cite{Anand:ITP18} and
Agda~\cite{vanderWalt:IFL12} \--- is that the generated code is simply
syntax and lacks many guarantees that we would expect: it may contain
dangling free variables and it may be ill-typed. Even in languages
such as MetaML~\cite{Taha:TCS00,Taha:POPL03} or more recently
Scala~\cite{Parreaux:SPLASH17,Parreaux:POPL18} and
Typed Template Haskell~\cite{Xie:POPL22}, which support the
generation of typed code, we lack guarantees about resource and memory
usage which often prevents compilers from exploiting this information
to generate efficient code. The generation of polymorphic code also
remains challenging despite recent work by \citet{Xie:POPL22} and \citet{Jang:POPL22}.
How can we improve this state-of-the-art and provide a uniform logical
foundation for type-safe, polymorphic, resource-aware metaprogramming?

Recently, the question of how to combine multiple logics with different
structural properties including linear, affine, strict, and
  (ordinary) intuitionistic logics has been tackled through graded or
quantitative type systems (see % for example,
\citep{McBride16,Atkey:LICS18,Moon:ESOP21,Choudhury:POPL21,Wood22esop,Abel:ICFP23}). The
essential idea is to track and reason explicitly about the usage of 
assumptions through grades. However, they do not support
metaprogramming. In particular, there is no distinction between code
(which simply describes syntax) and programs.
% they have staged programming / reasoning between late and early
% inputs, but they reduce eagerly 
%

We pursue here an alternative based
on adjoint logic~\cite{Benton:CSL94,Reed09,Pruiksma:JLAMP21,Pruiksma24phd,Jang:FSCD24}.
%We can also model intuitionistic S4 \citep{Pfenning01mscs}
%  and lax logic \citep{Benton:JFP98}, representing both comonadic and
%  monadic programming. 
% As such adjoint logics provide a new
%  proof-theoretic perspective on the integration of memory management and
% metaprogramming.  
In this adjoint setting, we distinguish between multiple logics using
\emph{mode}s where each mode has a specific set of structural
properties. For example, to model linear logic, we force every
resource in a mode to be used exactly once (that is, the mode
admits neither weakening nor contraction), thus ensuring
freedom from garbage. To model strictness, we specify a
mode that does not admit weakening. In addition, we allow a preorder
between modes $m \geq k$ that must be compatible with the structural
properties of $m$ and $k$, in the sense that \(m\) should at least allow
the structural rules allowed in \(k\). This allows us, among other aspects, to model
intuitionistic S4~\citep{Pfenning01mscs} and lax logic~\citep{Benton:JFP98},
representing both comonadic and monadic programming. 
To model the distinction between code that is being generated and the
program generator, we can use two modes: mode $\mC$ 
describes a suspended program that is eventually executable
at mode $\mP$ for programs where $\mC \geq \mP$.
% Although both code and programs may admit
%weakening and contraction and hence are indistinguishable wrt to their
%structural properties, we can still distinguish between them by
%defining the preorder $\mC \geq \mP$ on modes (see 
%also~\cite{Pruiksma:JLAMP21,Pruiksma24phd,Jang:FSCD24}).
% which is easily checkable.
The combination of adjoint modalities and a general mode structure
allows us to write linear comonadic programs and hence guarantees that the
generated code is garbage-free.

In this paper, we prsent a new foundation called \elevator{} for polymorphic
metaprogramming with memory management based on adjoint logic. It extends
prior work by \citet{Jang:FSCD24} in the following three directions:

\paragraph{1. Contextualizing Adjoint Logic}
In \elevator, modes refer to different regions in memory,
and the accessibility between regions is described using a preorder
of modes. For instance, \(m \geq k\) means memory region \(k\) can refer to memory
region \(m\). 
%While richer relationships between memory regions are in
%principle possible (see for example Licata et
%al.~\cite{Licata:LFCS16,Licata:FSCD17}), they are not easily checkable
%in practice.  % properties such as independence are no longer
                % fundamental 
As a consequence, the accessibility relation is easily checkable
while being rich enough to express a wide range of
memory regions and their relationships. Following prior work by \citet{Jang:FSCD24},
we move between regions using adjoint modalities $\tyDownSymb$ (down-shift modality)
and $\tyUpSymb$ (up-shift modality).
First, we interpret the down-shift modality $\tyDown{m}{k}{A_m}$
(where $m \geq k$) as a \emph{pointer in region $k$ to an object of type $A_m$ in
  region $m$}. More importantly, we contextualize the up-shift modality 
$\tyCtxUp{m}{k}{\Gamma}{A_k}$ (where $m \geq k$).
This allows us to describe an
\emph{(open) suspended object (thunk) with free variables $\Gamma$ in
  the region $k$} and to shield it from evaluation by tagging it as
belonging to region $m$.
In the metaprogramming setting, $k$
corresponds to $\mP$ (programs) and $m$ corresponds to $\mC$ (code),
i.e. $\mC \geq \mP$. Contextual up-shift modalities hence allow us to
describe a suspended piece of program, which we call a
\emph{template}. The down-shift modality then creates a pointer in the
$\mP$ region that refers to a template in $\mC$.
This fine-grained distinction between the template and the pointer to it
is only implicit when we use a (contextual) necessity modality
from modal logic S4 to characterize code \cite{Davies:ACM01,Nanevski:TOCL08,Jang:POPL22}.
It is this fine-grained distinction that gives \elevator{} the ability and power to
generate code more efficiently by controlling which structures are
accessed when in memory.
Moreover, as modes also allow us to describe a range of
structural properties, 
%including modes admitting neither weakening nor
%contraction,
 we can express even stronger guarantees: first, we can
 ensure that the suspended program is garbage-free; second, it also
 allows us to ensure that the program generation itself is garbage-free.
Hence, \elevator{} provides a uniform metaprogramming language to
express and statically verify a range of type and memory guarantees. 

\paragraph{2. An Operational Semantics for Adjoint Modalities Suitable for Metaprogramming} 
We define a suitable small-step operational semantics for \elevator{} which is
compatible with metaprogramming. The key idea for this operational semantics is
highlighted in the following principle:
\begin{quotation}
  The evaluation of a term should not know anything about
  a term in an inaccessible memory other than its syntactic structure.
\end{quotation}
This gives a different semantics from the one given in~\citet{Jang:FSCD24}
which treats a template as a black box. Since a template is an
object with a syntactic structure, we can keep the syntactic equality of
templates regardless of the construction order. For instance, whether we splice
\lstinline!x + x! into the hole \lstinline!X! of \lstinline!y * X! or splice
\lstinline!y! into the hole \lstinline!Y! of \lstinline!Y * (x + x)!, our system
returns exactly the same program template \lstinline!y * (x + x)!.
% BP: I don't actually understand why in our FSCD paper this would not
% be satisfied.
%whereas in their system describes suspended terms with their
%respective environment as a closures and hence the two aforementioned
%examples describe two different closures. 
In fact, this property is also
satisfied in existing metaprogramming systems based on modal logic
S4~\cite{Davies:ACM01,Nanevski:TOCL08,Jang:POPL22}. 
% having this syntactic
% equality, 
This is achieved by treating splicing eagerly in our operational
semantics even in inaccessible regions. Hence, our system
conservatively extends prior metaprogramming foundations based on S4 with fine-grained 
memory control and access guarantees in contrast to~\citet{Jang:FSCD24}.

\paragraph{3. Adjoint Modalities and System-F style Polymorphism}
To obtain a core language for practical metaprogramming, we add
System-F style polymorphism to \elevator. This provides on the one
hand a logical foundation and alternative for combining linearity with
polymorphism as in, for example~\citet{Bernardy:POPL18}, but on the
other hand, supports polymorphic code generation as in~\citet{Jang:POPL22}.
To include System-F style polymorphism, we introduce the up-shift modality
for kinds and the concept of \emph{type templates}, which describe a
syntactic form of a type that can be used across modes.
This concept of type templates has a significant impact on
how we define the syntax of types and operations such as
substitution. First of all, splicing a type template into another
introduces a type-level redex. 
% such as \lstinline!f[t[Nat$_k$]t]f!. 
% BP: You have not introduced this syntax before -- in fact earlier in
% Y * (x+x) you simply write Y although you call instantiating Y "splicing"
To avoid this kind of type-level computation and keeping syntactic
equalities between identical types, we define the syntax of types only for
normal types and define substitution on types as hereditary
substitution~\cite{Watkins02tr,Nanevski:TOCL08}, which reduces all
type-level redices occuring during the substitution. An orthogonal problem
arising due to type variables is how structural rules apply to type
variables. We take a similar approach 
to~\citet{Atkey:LICS18} for type variables, i.e.\ they persist regardless
of their modes. This approach is sensible because the type-checking
and type/kind construction are compile-time steps rather than runtime
computation. This approach is also convenient in terms of programming,
as the type of \lstinline!id : (\alpha : Type) -> \alpha -o \alpha!
where the type variable $\alpha$ occurs twice is well-formed in a
linear mode without any overhead.
% a detour through an unrestricted mode.

On the theoretical side, we prove type safety for
\elevator{}. Moreover, 
we prove the substructurality of variable references
% What does linearity of variable references mean here? Why is it
% important?
and prove mode safety, which guarantees that the evaluation of a term
does not access a value in an inaccessible memory.
On the practical side, we discuss
as an example the \emph{In-Place Array Update}. Here we update each
element of an array using a function $f$ that transforms an element of
type $\alpha$ to an element of type $\beta$. We demonstrate how to
utilize our mode structure to update the array in-place and how to
generate polymorphic, garbage-free code that is parametric in $f$
and polymorphic in $\alpha$ and $\beta$.
We have also implemented and checked this example with our prototype
 environment for \elevator.
In conclusion, our work lays the ground for a practical foundation for
metaprogramming with memory regions that extends System F with
adjoint modalities. 

\section{Motivations}\label{sec:motivations}
We give here a gentle introduction to metaprogramming with adjoint
modalities where we distinguish between 
code (mode $\mC$) and programs (mode $\mP$) where $\mC \geq \mP$. We
subsequently use modes to generate garbage-free code and capture cross-stage persistence of
polymorphic type variables.

\subsection{Code Generation Using Contextual Adjoint Modalities}

% Intuitively,
% we want $c$ to describe purely code that can be run at every given
% stage (see \cref{fig:example-nth-modes}). Both of these modes allow
% weakening and contraction, yet the preorder $\mP \geq \mC$ still
% segregates these regions.

% \begin{wrapfigure}{r}{0.3\textwidth}
%   \centering
%   \scriptsize
%   \begin{tikzpicture}[scale=0.7]
%     \coordinate (left) at (0,0);
%     \coordinate (right) at (4,0);
%     \coordinate (middle) at ($(left)!0.5!(right)$);
%     \foreach \y in {0,...,1}
% 	{
%       \draw
%       ($(left)+(0,\y)$)
%       -- ($(right)+(0,\y)$)
%       -- ($(right)+(0,\y+1)$)
%       -- ($(left)+(0,\y+1)$)
%       -- cycle;
%     }
%     \node[text centered] at ($(middle)+(0,1.5)$) {Code region (\lstinline!c!)};
%     \node[text centered] at ($(middle)+(0,0.5)$) {Program region (\lstinline!p!)};
%   \end{tikzpicture}
%   \caption{Modes for Unrestricted Metaprogramming}
%   \label{fig:example-nth-modes}
% \end{wrapfigure}

To illustrate how to write code generators using adjoint modalities,
we generate code to look up the \lstinline!n!-th element in a polymorphic list
\lstinline!xs! where \lstinline!xs! is supplied at a later (next)
stage. Given an
\lstinline!n!, we want the program \lstinline!nth! to return a pointer to the
suspended program template 
\begin{center}
\lstinline!t[\alpha, xs. head (!$\underset{n}{\underbrace{\mbox{\lstinline!tail (.. (tail !}}}$\lstinline! xs)))]t!
\end{center}

We first give a naive implementation which literally
translates the program from ~\citet{Jang:POPL22} to the adjoint
setting. 
% We exploit the fact that the contextual necessity modality
% $\Box (\Psi \vdash A)$ in intuitionistic S4~\citep{Pfenning01mscs, Nanevski:TOCL08} can be modelled 
%$\tyDown{\mC}{\mP}{\tyUp{\mC}{\mP}{\ulcorner A \urcorner}}$
%(see \cite{Jang:FSCD24})

\begin{lstlisting}
nth : Nat$_\mP$ -> t[\alpha:Type$_\mP$, xs:\alpha List$_\mP$ |- \alpha]t^|$^\mC_\mP$v|$^\mC_\mP$
nth n = match n with
  | 0 => store t[\alpha, xs . head xs]t
  | n =>
    load C = nth (n - 1) in
    store t[\alpha, xs . f[C]f@(\alpha, tail xs)]t
\end{lstlisting}

The function \lstinline!nth! generates a pointer to a suspended
program (i.e.\ a template) in region \(\mC\) that is still waiting
for the type variable $\alpha$ and the list 
\lstinline!\alpha List$_\mP$!. When we eventually supply the
concrete list for \lstinline!xs!, we can resume the evaluation of this
suspended program and retrieve the \lstinline!n!-th item in the given list \lstinline!xs!. 
% The template may depend on the type variable $\alpha$ and the list \lstinline!\alpha List$_\mP$!. 
What is the idea behind this program?
In the base case where we want to retrieve the zeroth element, we generate the template
\lstinline!t[\alpha, xs . head xs]t! where \lstinline!\alpha! and
\lstinline!xs! are variables provided when we actually evaluate the
term. Then, using the \lstinline!store! primitive, we return a pointer
to this template.
In the recursive case where \lstinline!n! is greater than \lstinline!0!, we want to retrieve the element at position \lstinline!n!. 
The recursive call \lstinline!nth (n - 1)! returns a pointer to a template that retrieves the element at position \lstinline!(n - 1)!.
We hence first load the template (i.e.\ suspended program) from the pointer and bind it to \lstinline!C!. Note that \lstinline!C! stands for an open program template that depends on $\alpha$ and \lstinline!xs!. To build the final program template that retrieves the element at position \lstinline!n!, we use \lstinline!C! instantiating \lstinline!xs! with \lstinline!tail xs!. We write \lstinline! f[C]f@(\alpha, tail xs)! for this instantiation of the open program template with the substitution \lstinline!@(\alpha, tail xs)!.
Last, we return a pointer to this program template as a result.

The program above, modulo some syntax difference, is identical to
the implementation in other metaprogramming systems based on the contextual modal
type, e.g.~\cite{Jang:POPL22}. However, note how we repeatedly store/load
program templates, as the recursive program returns in essence a pointer to
a program template. This incurs an overhead of repeated memory accesses during
code generation.

In \elevator{}, we can do better by taking advantage of the adjoint modalities
to minimize memory accesses during code generation. The key idea is to separate the
generation and composition of templates from the storing of the final result in region $\mP$.
To accomplish this we first define the function \lstinline!nthGen!. 
\begin{lstlisting}
nthGen : Nat$_\mC$ -> [\alpha:Type$_\mP$, xs:\alpha List$_\mP$ |- \alpha]^|$^\mC_\mP$
nthGen n = match n with
  | 0 => t[\alpha, xs . head xs]t
  | n => t[\alpha, xs . f[nthGen (n - 1)]f@(\alpha, tail xs)]t
\end{lstlisting}

Note that this function does not access memory region $\mP$.
In fact, this function entirely lives in mode $\mC$. 
%Hence, we do not need to get a pointer
%or load the value of a pointer to process it in \lstinline!p!. 
However, moving the program generation safely to region $\mC$
also means that we need to move the input to $\mC$; it is now
\lstinline!Nat$_\mC$! (a natural number stored in memory
region $\mC$). We can accomplish this by explicitly converting
a natural number from region $\mP$ to one in region $\mC$.

\begin{lstlisting}
convertNat : Nat$_\mP$ -> Nat$_\mC$ v|$^\mC_\mP$
convertNat n = match n with
  | 0 => store 0
  | n => load N = convertNat (n - 1) in
         store (N + 1)
\end{lstlisting}

\newpage
Finally, we re-implement the code generator \lstinline!nth!.
% instead of \lstinline!Nat$_p$! (a natural number stored in memory region of \lstinline!p!).
%This change, together with the minimal amount of
%\lstinline!store!/\lstinline!load! we need in order to use it in our main
%programming mode \lstinline!p!, is reflected in the following main function.
\begin{lstlisting}
nth : Nat$_\mP$ -> [\alpha:Type$_\mP$, xs:\alpha List$_\mP$ |- \alpha]^|$^\mC_\mP$v|$^\mC_\mP$
nth n =
  load N = convertNat n in
  store (nthGen N)
\end{lstlisting}

Here, we first convert the natural number \lstinline!n! in $\mP$
into $\mC$ and then generate the result in mode $\mC$. Once we get the
final program template, we return a pointer to it. This clearly separates
template generation and composition from storing and loading templates.

At first glance, it may seem that we just move repeated memory accesses
using \lstinline!store!/\lstinline!load! instructions into
\lstinline!convertNat! and that this only increases the number of iterations
from \(n\) to \(2n\). However, the situation is now very different from the
previous, unoptimized version:

\begin{itemize}
\item First, now we do not need to store and load program templates which
  are more complex structures than storing and loading natural numbers.
  Storing the latter requires less memory space than storing the program
  templates. This reduces IO cost, which dominates the computation cost in
  most modern architectures.
\item Furthermore, we can use efficient data representation for natural numbers.
  For instance, in the case of \lstinline!Nat$_\mP$!, we can implement the
  function based on binary encoding, i.e. instead of using a recursive call
  on \lstinline!n - 1!, we can use one that applied to \lstinline!n/2!.
\item Finally, if we use the same natural number to generate many
  different templates, the loop for \lstinline!convertNat! can be shared
  after inlining. Then, the natural number is only converted once.
\end{itemize}

In other words, adjoint modalities open up more opportunities for optimizing
code generation.

\subsection{Polymorphic Code Generation and Cross-Stage Persistent Type Variables}
\label{subsec:motivation-poly-code-gen}
To illustrate the issues of supporting polymorphic code generation and
enforcing resource guarantees, recall the polymorphic \lstinline!map! function:

\begin{lstlisting}
map: (\alpha,\beta : Type) -> (\alpha -> \beta) -> \alpha list -> \beta list
map f xs = match xs with
  | Nil       => Nil
  | Cons x xs => Cons (f x) (map f xs)
\end{lstlisting}

We will transform this program step-wise.
% to write a code generator
%that when given a \lstinline!\alpha list! generates a program template
%that is still waiting to receive a function \lstinline!f:\alpha ->
%\beta! to compte a \lstinline!\beta list!.
%resource (linearity) guarantees and 
As a first step, we ask: how would the type and the program change if we assume that the input
list \lstinline!\alpha list! is linear (i.e.\ garbage-free) and we want to preserve this
linearity? We consider here linear lists as a simplified version of an
array to illustrate the considerations that come into play. We give a
more realistic version for arrays in \Cref{sec:implementation}.
In the second step, we generate polymorphic code; based on the input list
\lstinline!\alpha list!, we generate a polymorphic template that is still waiting
for \lstinline!f : \alpha -> \beta! and builds the transformed
\lstinline!\beta list!.

\paragraph{Enforcing linearity}
Enforcing linearity hits two roadblocks. First,
the program \lstinline!map! is not linear in the function
\lstinline!f : \alpha -> \beta!. In the base case where we have
an empty list, \lstinline!f! is unused whereas in the step case, we use \lstinline!f!
twice. Second, the type of \lstinline!map! uses type
variables $\alpha$ and $\beta$ in the garbage-free parts
(\lstinline!\alpha list! and \lstinline!\beta list!)
as well as in the unrestricted argument (\lstinline!\alpha -> \beta!).

% \mP \geq \mGF$ 
To distinguish between a garbage-free memory region ($\mGF$) (i.e.\ every
resource is used exactly once) and those regions $\mP$ where resources can
be used freely, we use the preorder $\mP \geq \mGF$. We now decorate
type variables as well as types with the appropriate adjoint modalities.
For \lstinline!f!, we put \lstinline!^|$^\mP_\mGF$v|$^\mP_\mGF$! on its type,
so that it becomes closed in terms of garbage-free memory region and thus
can be used multiple times (as many times as the input list requires).
Here, since we want to make \lstinline!f! closed, we use
\lstinline!^|$^\mP_\mGF$! without a context.
For type variables, we put \lstinline!^|$^\mP_\mGF$! on their kind.
With this adjustment, we make the type arguments \lstinline!\alpha! and
\lstinline!\beta! be templates for types, which are available both in
the garbage-free memory region and unrestricted region.
In order to use these templates as real types, we add \lstinline!f[]f!
around the places where we use the type variables.

\begin{lstlisting}
mapLin : (\alpha, \beta:Type$_\mGF$^|$^\mP_\mGF$) -> (f[\alpha]f -o f[\beta]f) ^|$^\mP_\mGF$v|$^\mP_\mGF$ -o f[\alpha]f List$_\mGF$  -o f[\beta]f List$_\mGF$
mapLin f xs = match xs with
  | Nil$_\mGF$       => load F = f in Nil$_\mGF$     -- "load" discards F
  | Cons$_\mGF$ x xs =>
     load F = f in                        -- "load" allows to use F twice
     Cons$_\mGF$ (f[F]f x) (mapLin (store(F)) xs)
\end{lstlisting}

In the program, we load a template
\lstinline!F : (f[\alpha]f -o f[\beta]f) ^|$^\mGF_\mGF$!
from the pointer \lstinline!f!. This consumes \lstinline!f!
in \(\mGF\) while introducing \lstinline!F! lives in \(\mP\).
As \(\mP\) allows multiple uses of objects in its memory region,
we can both skip \lstinline!F! completely (in the base case) or use
it twice (in the step case). In the second use of the step case,
we use \lstinline!store(F)! to get a pointer in \(\mGF\) referring to
\lstinline!F!, because the old pointer \lstinline!f! is already used
and cannot be used twice.

% When we use this program, \(\alpha\) and \(\beta\) take a type template,
% as in \lstinline!linearMap t[Nat$_\mGF$]t t[Nat$_\mGF$ -o Nat$_\mGF$]t!.
% This introduces one problem; if we naively substitute those type templates into
% the type of \lstinline!linearMap!, the result type will be
% \begin{lstlisting}
% (f[t[Nat$_\mGF$]t]f -o f[t[Nat$_\mGF$ -o Nat$_\mGF$]t]f) ^|$^\mP_\mGF$v|$^\mP_\mGF$ -o f[t[Nat$_\mGF$]t]f List$_\mGF$  -o f[t[Nat$_\mGF$ -o Nat$_\mGF$]t]f List$_\mGF$  
% \end{lstlisting}
% and we need type-level reduction to get the usable type
% \begin{lstlisting}
% (Nat$_\mGF$ -o (Nat$_\mGF$ -o Nat$_\mGF$)) ^|$^\mP_\mGF$v|$^\mP_\mGF$ -o Nat$_\mGF$ List$_\mGF$  -o (Nat$_\mGF$ -o Nat$_\mGF$) List$_\mGF$  
% \end{lstlisting}
% In order to avoid this problem, we define a hereditary substitution for types
% \cite{Watkins02tr,Nanevski:TOCL08}, which reduces these redices during
% the substitution. Thus, we do not need to apply any type-level reduction to get
% the correct result type.
\paragraph{Generating code  with resource guarantees}
As \elevator{} allows a program to have more modes than \(\mGF\)
and \(\mP\), we can convert \lstinline!mapLin! into a program
generator which we will call \lstinline!mapLinMeta!. There are a few more
ingredients that we need. First, we need to add a mode \(\mC\) for code
with the preorder \(\mC \ge \mP\). We also need to add another argument
to the function that lifts a value of \lstinline!f[\alpha]f! to the
$\mC$ level. This is necessary since we need to splice each entry
into the code of the transformed list.
%while generating
%the code for mapping each entry of a list, 
Parameterizing polymorphic code generators with a polymorphic
lift function has also been used in other metaprogramming
systems such as \moebius \cite{Jang:POPL22}. % Hu:JFP23}

As we need to use this lifting function multiple times at the program
level, we put it under the modalities \lstinline!^|$^\mP_\mGF$v|$^\mP_\mGF$!. % as for \lstinline!f! in \lstinline!linearMap!.
We also need to adjust the up-shift for one of the type arguments: for \(\alpha\),
we want to use it in the current stage (for the type of the lifting function
and input list type) and in the next stage (for the type of the mapping function
\lstinline!f!). To make it available not only for garbage-free and unrestricted
modes but also in a cross-stage persistent manner, we use \lstinline!^|$^\mC_\mGF$!
instead of \lstinline!^|$^\mP_\mGF$!. Note that $\mC > \mP$ and hence
the type template is globally available (i.e.\ has fewer assumptions about its use site).
Also, note that we do not need to provide \(\beta\) upfront. In fact, if we
want to generate the most polymorphic program template, we should allow \(\beta\)
to be chosen later when the generated program is used. Putting these different
considerations together, we get the following \lstinline!mapLinMeta!.

\begin{lstlisting}
mapLinMeta : (\alpha:Type$_\mGF$^|$^\mC_\mGF$)
           -> (f[\alpha]f -o f[\alpha]f^|$^\mC_\mGF$v|$^\mC_\mGF$)^|$^\mP_\mGF$v|$^\mP_\mGF$                      % Generic lift function
           -o f[\alpha]f List$_\mGF$                                   % Input list xs
           -o [\beta:Type$_\mGF$^|$^\mP_\mGF$, f:(f[\alpha]f -o f[\beta]f)^|$^\mP_\mGF$v|$^\mP_\mGF$ |- f[\beta]f List$_\mGF$]^|$^\mC_\mGF$v|$^\mC_\mGF$
mapLinMeta \alpha lift xs = match xs with
  | Nil$_\mGF$       => load LIFT = lift in store t[\beta, f . load F = f in Nil$_\mGF$]t
  | Cons$_\mGF$ x xs => load LIFT = lift in
                  load X = LIFT x in
                  load YS = mapLinMeta \alpha (store (LIFT)) xs in
                  store t[\beta, f .
                             load F = f in Cons$_\mGF$ (f[F]f f[X]f) f[YS]f@(\beta, store(F))]t
\end{lstlisting}
In both cases in the above program, 
%\lstinline!lift! follows the same pattern as of \lstinline!f! in
%\lstinline!linearMap! case. 
we first access the template that \lstinline!lift! points to using
\lstinline!load!. Hence, we can free the memory location for the pointer
\lstinline!lift! itself, and we are not restricted in the usage of
\lstinline!LIFT!. The base case returns the base case of
\lstinline!linearMap! as a program template. In the step case, we first use
\lstinline!LIFT! on \lstinline!x! to move it to the $\mC$ region as a template
representing its value (\lstinline!X!), and then construct the head
(\lstinline!f[F]f f[X]f!). The tail of the resulting list is generated by the
recursive call (which gives \lstinline!YS! after loading the result) and then
instantiating \lstinline!YS! using \lstinline!@(\beta, store(F))!.

% Since we want to use type variables in the context of upshift
% (e.g.\ \lstinline!\beta! in the context of \lstinline!^|$^c_p$!),
% we put a mode for a type variable (e.g.\ \lstinline!p! of \lstinline!Type$_p$!).
% Furthermore, as we want to persist some type variables
% so that they can be used both in the current stage and the next stage, we want to
% shift the kind of type variables as well (e.g.\ \lstinline!\alpha! whose kind is
% \lstinline!^|$^c_p$!).

% These treatments allow us to write a polymorphic code fragment that depends only
% on minimum type information at each stage.

We can also optimize this code generator by exploiting the same ideas as
in the earlier \lstinline!nth! example to get a more efficient version.
\begin{lstlisting}
mapLinMetaGen : (\alpha:Type$_\mGF$^|$^\mC_\mGF$)
                 -> (f[\alpha]f ^|$^\mC_\mGF$) List$_\mC$
                 -> [\beta:Type$_\mGF$^|$^\mP_\mGF$, f:(f[\alpha]f -o f[\beta]f)^|$^\mP_\mGF$v|$^\mP_\mGF$ |- f[\beta]f List$_\mGF$]^|$^\mC_\mGF$
mapLinMetaGen \alpha Xs = match Xs with
  | Nil$_\mC$       => store t[\beta, f . load F = f in Nil$_\mGF$]t
  | Cons$_\mC$ X Xs => store t[\beta, f .
                     load F = f in Cons$_\mGF$ (f[F]f f[X]f) f[mapLinMetaGen \alpha Xs]f@(\beta, store(F))]t
\end{lstlisting}
Here, we use \lstinline!->! for \lstinline!mapLinMetaGen!, as it is a function
in \(\mC\), not in the garbage-free mode \(\mGF\). We follow the same routine
as in the \lstinline!nth! optimization. We add a helper function
\lstinline!mapLinMetaGen! that only cares about how to compose
a program template.
%, not about converting a \(\mGF\) value into
%\(\mC\) or so. 

We also need a function that converts a list. This can be defined as the following:
\begin{lstlisting}
convertList : (\alpha:Type$_\mGF$^|$^\mC_\mGF$)
            -> (f[\alpha]f -o f[\alpha]f ^|$^\mC_\mGF$v|$^\mC_\mGF$)^|$^\mP_\mGF$v|$^\mP_\mGF$
            -o f[\alpha]f List$_\mGF$
            -o (f[\alpha]f ^|$^\mC_\mGF$) List$_\mC$ v|$^\mC_\mGF$
convertList \alpha lift xs = match xs with
  | Nil$_\mGF$       => load LIFT = lift in store Nil$_\mC$
  | Cons$_\mGF$ x xs => load LIFT = lift in
                   load X = LIFT x in
                   load XS = convertList \alpha (store(LIFT)) xs in
                   store (Cons$_\mC$ X XS)
\end{lstlisting}

Now, we can define the main function by first converting the input list
(using \lstinline!convertList!) and then building a template using
\lstinline!mapLinMetaGen! on that converted list.
\begin{lstlisting}
mapLinMeta : (\alpha:Type$_\mGF$^|$^\mC_\mGF$)
             -> (f[\alpha]f -o f[\alpha]f ^|$^\mC_\mGF$v|$^\mC_\mGF$)^|$^\mP_\mGF$v|$^\mP_\mGF$
             -o f[\alpha]f List$_\mGF$
             -o [\beta:Type$_\mGF$^|$^\mP_\mGF$, f:(f[\alpha]f -o f[\beta]f)^|$^\mP_\mGF$v|$^\mP_\mGF$ |- f[\beta]f List$_\mGF$]^|$^\mC_\mGF$v|$^\mC_\mGF$
mapLinMeta \alpha lift xs = 
  load XS = convertList \alpha lift xs in
  store (mapLinMetaGen \alpha XS)
\end{lstlisting}

As in the \lstinline!nth! case, the conversion function \lstinline!convertList!
owns most of \lstinline!store!/\lstinline!load! pairs. Thus, if we reuse the
result of the \lstinline!convertList! call, we can avoid a lot of
\lstinline!store!/\lstinline!load! operations. For example, if we generate
a program using \lstinline!mapLinMeta! on a list and also generate a program
to fold values in the same list, we can inline those generators and share
the result of \lstinline!convertList!.

In summary, this example shows that we can seamlessly combine metaprogramming (with
a cross-stage persistent type variable) and garbage-free memory management. Moreover,
we can still apply the same optimization routine we apply for metaprogramming for such
a combination.

\section{\elevator: Polymorphic Metaprogramming Using Adjoint Modalities}
\subsection{Mode Specification}
\elevator{} is a system parametrized by a mode specification.
A mode specification \(\modeSet\) is a preordered set of modes with a \emph{signature} operation
\(\modeSigSymb : \modeSet \to \mathcal{P}(\{\modeCo, \modeWk\})\) where
\(\mathcal{P}(S)\) means the powerset of \(S\). This signature operation tells us what
structural rules (\(\modeCo\) for contraction and \(\modeWk\) for weakening) are allowed
in each mode. The signature operation should satisfy the following condition:
\[
  m \geq k \text{ then } \modeSig{m} \supseteq \modeSig{k}
\]
This condition protects us from having a wrong mode specification, for example,
where one can use linear memory more than once via unrestricted memory region
accessing it.

For the following sections, we assume that there is a global fixed mode specification.
However, as long as the valid mode specification is provided, the system and its
metaproperties work as is.

\subsection{Syntax}
\elevator{} extends the polymorphic lambda-calculus with storing and
loading terms into a memory region (written as $ \tmStore{e}$ and
$\tmLoad{x}{e}{f}$) and with suspending terms and resuming their
evaluation. We write \(\tmCtxSusp{\hat\Psi}{e}\) for suspending
the term $e$ with the free variables in $\hat\Psi$. We simply use an
\emph{erased context} \(\hat\Psi\), which is a list of variable names,
as we do not need to track type/kind information there. 
To resume the evaluation of a suspended term (template) we use 
$\tmForceWith{e}{\sigma} $. Here the term $e$ evaluates to some
template of $e'$ with free variables in $\hat\Psi$
(i.e.\ \(\tmCtxSusp{\hat\Psi}{e'}\)). The associated substitution $\sigma$
provides a map between the variables in $\hat\Psi$ to the current context
and hence allows us to continue the execution of $e'$ with the environment
provided by $\sigma$.

\[
%  \scriptsize
  \begin{array}{llcl}
   \text{Modes} & k,l,m, \andothers & &\\[0.25em]
   \text{Kinds} & K_k, \andothers & \syntaxDef & \kiType{k} \syntaxAlt \kiCtxUp{k}{l}{\Psi}{K_l}
\\[0.25em]
   \text{Normal Types} & A_k,B_k, \andothers & \syntaxDef & \tyOne{k} \syntaxAlt P_k\\
                & & \syntaxAlt & \tyCtxThunk{\hat\Psi}{A_l} \syntaxAlt \tyCtxUp{k}{l}{\Psi}{A_l} \syntaxAlt \tyDown{n}{k}{A_n}\\
                & & \syntaxAlt & \tyForall{\alpha}{K_m}{A_k} \syntaxAlt \tyLarr{A_k}{B_k}\\
    \text{Neutral Types} & P_k,Q_k, \andothers & \syntaxDef & \alpha \syntaxAlt \tyForceWith{P_m}{\sigma}
\\[0.25em]
    \text{Terms} & e,f,g, \andothers & \syntaxDef & x \syntaxAlt \tmOne\\
                & & \syntaxAlt & \tmCtxSusp{\hat\Psi}{e} \syntaxAlt \tmForceWith{e}{\sigma} \syntaxAlt \tmStore{e} \syntaxAlt \tmLoad{x}{e}{f}\\
                & & \syntaxAlt & \tmTLam{\alpha}{K_k}{e} \syntaxAlt \tmTApp{e}{A_k} \syntaxAlt \tmLam{x}{A_k}{e} \syntaxAlt \tmApp{e}{f}
\\[0.25em]
    \text{Contexts} & \Gamma,\Delta,\Psi, \andothers & \syntaxDef & \ctxEmpty \syntaxAlt \ctxCons{\Gamma}{\alpha{:}K_k} \syntaxAlt \ctxCons{\Gamma}{x{:}A_k}
\\[0.25em]
    \text{Substitutions} & \sigma,\tau,\rho, \andothers & \syntaxDef & \subEmpty \syntaxAlt \subCons{\sigma}{\alpha{\mapsto}A_k} \syntaxAlt \subCons{\sigma}{x{\mapsto_k}e}\\
  \end{array}
\]

Our types include polymorphic functions
($\tyForall{\alpha}{K_m}{A_k}$) and substructural functions,
written as $\tyLarr{A_k}{B_k}$. We also include the two adjoint modalities:
contextual up-shift, written as $\tyCtxUp{k}{l}{\Psi}{A_l}$ and
down-shift as $\tyDown{n}{k}{A_n}$.
% We presuppose that in the contextual
% up-shift we have $k \geq l$ and for the down-shift we have $n \geq k$.
We note that we have a mode associated with every type, in particular,
the base type $\tyOne{k}$ exists for each mode $k$.

As we discussed in the earlier examples, types and in particular type
variables may occur in different regions. This leads us to introduce
\emph{contextual kinds}, written as $\kiCtxUp{k}{l}{\Psi}{K_l}$. These
characterize suspended types, written as $\tyCtxThunk{\hat\Psi}{A_l}
$. We can use suspended types using $\tyForceWith{P_m}{\sigma}$
where the substitution \(\sigma\) instantiates the context for the
type template.

We only characterize normal and neutral types in our grammar. 
This avoids any type-level computation due to suspending and splicing
types as in the following example: 
\(\tyForceWith{\tyCtxThunk{\alpha}{\alpha}}{(\alpha{\mapsto}\tyOne{k})}\).

Although we provide only modal types and function types, the usual
linear types, such as \(\tyTensorSymb\), \(\tyPlusSymb\), and \(\tyAndSymb\)
can be easily added to the normal types of this system by
following~\citet{Jang:FSCD24}.

% \paragraph{Remark:} When \(\Psi\) is \(\ctxEmpty\) (the empty context),
% we use \(\tyThunk{A_l}\) and \(\tmSusp{e}\) instead of
% \(\tyCtxThunk{\hat\Psi}{A_l}\) and \(\tmCtxSusp{\hat\Psi}{e}\), and
% use \(\kiUp{k}{l}{K_l}\) and \(\tyUp{k}{l}{A_l}\) instead of
% \(\kiCtxUp{k}{l}{\Psi}{K_l}\) and \(\tyCtxUp{k}{l}{\Psi}{A_l}\)
% for brevity. Likewise, when \(\sigma\) is \(\subEmpty\) (the empty
% substitution), we abbreviate
% \(\tyForceWith{A_m}{\sigma}\) and \(\tmForceWith{e}{\sigma}\) to
% \(\tyForce{A_m}\) and \(\tmForce{e}\).

\subsection{Context Operations}
In preparation for our type system, we define a context split
operation $\ctxMerge{\Gamma_1}{\Gamma_2}$.
This extends the context split operation given in \citet{Jang:FSCD24}
to contexts containing type variables as well as term variables.
In particular, we allow duplicating type variables disregarding their modes
as in~\citet{Atkey:LICS18}.
As explained in the introduction, this is sensible since the type-checking
and type/kind construction are compile-time steps rather than runtime computation.

\[
%  \scriptsize
  \begin{array}{l@{~\ctxMerge{}{}~}lcll}
    (\ctxCons{\Gamma_1}{\alpha{:}K_k}) & (\ctxCons{\Gamma_2}{\alpha{:}K_k}) & = & \ctxCons{(\ctxMerge{\Gamma_1}{\Gamma_2})}{\alpha{:}K_k}\\
    (\ctxCons{\Gamma_1}{\alpha{:}K_k}) & \Gamma_2 & = & \ctxCons{(\ctxMerge{\Gamma_1}{\Gamma_2})}{\alpha{:}K_k}\\
    \Gamma_1 & (\ctxCons{\Gamma_2}{\alpha{:}K_k}) & = & \ctxCons{(\ctxMerge{\Gamma_1}{\Gamma_2})}{\alpha{:}K_k}\\
    (\ctxCons{\Gamma_1}{x{:}A_k}) & (\ctxCons{\Gamma_2}{x{:}A_k}) & = & \ctxCons{(\ctxMerge{\Gamma_1}{\Gamma_2})}{x{:}A_k} &\text{if } \modeCo \in \modeSig{k}\\
    (\ctxCons{\Gamma_1}{x{:}A_k}) & \Gamma_2 & = & \ctxCons{(\ctxMerge{\Gamma_1}{\Gamma_2})}{x{:}A_k}\\
    \Gamma_1 & (\ctxCons{\Gamma_2}{x{:}A_k}) & = & \ctxCons{(\ctxMerge{\Gamma_1}{\Gamma_2})}{x{:}A_k}\\
    \ctxEmpty& \ctxEmpty & = & \ctxEmpty\\
  \end{array}
\]

For a term variable $x{:}A_k$ where $k$ does not allow contraction,
we must make sure that $x$ is used only once. However, if a mode allows
contraction, we may want to use it several times. We split a
context into two while respecting this in the definition of the
context split.

We prove that the context split is commutative and associative.

\begin{lemma}[Properties of Context Split]\mbox{}
  \begin{itemize}
  \item \emph{Commutative :} \(\ctxMerge{\Gamma_1}{\Gamma_2} = \Gamma\) if and only if
    \(\ctxMerge{\Gamma_2}{\Gamma_1} = \Gamma\)
  \item \emph{Associative :} \(\ctxMerge{(\ctxMerge{\Gamma_1}{\Gamma_2})}{\Gamma_3} = \Gamma\) if and only if
  \(\ctxMerge{\Gamma_1}{(\ctxMerge{\Gamma_2}{\Gamma_3})} = \Gamma\)
  \end{itemize}
\end{lemma}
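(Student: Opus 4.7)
The plan is to prove both properties by straightforward induction, exploiting the symmetry present in the definition of $\ctxMerge{}{}$. Since the definition is given by pattern-matching on the rightmost entries of $\Gamma_1$ and $\Gamma_2$, I would structure the argument as a case analysis mirroring that presentation, and I would treat the merge as a partial relation (both the ``if'' and ``only if'' directions need care because of the side condition $\modeCo \in \modeSig{k}$ in the shared term-variable case).

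For commutativity, I would proceed by induction on the sum of the lengths of $\Gamma_1$ and $\Gamma_2$. The base case $\ctxEmpty \bowtie \ctxEmpty = \ctxEmpty$ is self-symmetric. For the inductive step, I would observe that the seven defining clauses pair up under swap: the two type-variable ``one-sided'' clauses are exact images of each other, the two term-variable ``one-sided'' clauses are likewise images, and the three ``both-sided'' clauses (type-variable shared, term-variable shared under $\modeCo$, and the empty case) are symmetric in $\Gamma_1,\Gamma_2$. The side condition $\modeCo \in \modeSig{k}$ in the shared term-variable clause is obviously symmetric, so if it holds for the original split it holds for the swapped one. The inductive hypothesis then applies to the smaller merge appearing on the right-hand side of each clause.

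For associativity, I would again induct on the total size of $\Gamma_1,\Gamma_2,\Gamma_3$ and case-split on their rightmost entries (with eight combinations: each context either ends in a fitting entry or not, after fixing which variable $x$ or $\alpha$ is being considered). In each case, I would use commutativity (already proved) to normalize the order in which one-sided clauses are applied, then push the merge inward using the inductive hypothesis. The type-variable sub-cases are immediate because type variables may be duplicated freely. For the term-variable sub-cases, I would verify that $x$ appears in at least one of $\Gamma_1,\Gamma_2,\Gamma_3$ on the left iff it appears on the right, and that if it appears in multiple of them then $\modeCo \in \modeSig{k}$ is required on both sides of the equivalence, so the side conditions line up exactly.

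The main obstacle, and the only place where anything beyond bookkeeping happens, is the term-variable shared clause together with its interaction across a three-way split: when $x{:}A_k$ appears in all three contexts, the left-hand parenthesization asks that $\modeCo \in \modeSig{k}$ both for the inner merge $\ctxMerge{\Gamma_1}{\Gamma_2}$ and for the outer merge with $\Gamma_3$, while the right-hand parenthesization asks the same from the other side; both reduce to the single condition $\modeCo \in \modeSig{k}$, so the two sides are genuinely equivalent. I would also note that when $x$ appears in only two of the three contexts, the merge is defined on one side without the $\modeCo$ side condition, and commutativity is what lets me match the clauses on the other parenthesization. Once these cases are handled, the remaining work is purely mechanical rewriting using the inductive hypothesis.
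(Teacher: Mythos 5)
Your proof is correct and matches the paper's approach exactly: the paper also argues by induction together with a case analysis on the defining clauses of \(\ctxMerge{}{}\) (inducting on the merged context \(\Gamma\) rather than on the sum of input lengths, which is an equivalent measure here). Your elaboration of the symmetry of the clauses under swapping and of how the \(\modeCo \in \modeSig{k}\) side condition lines up across the two parenthesizations is the right bookkeeping that the paper's one-line proof leaves implicit.
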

\begin{proof}
  By induction on \(\Gamma\) and case analysis on the definition of \(\ctxMerge{}{}\).
\end{proof}

\subsection{Substitution}
As we only use normal types in \elevator, we want to define
the (parallel) substitution operation so that it preserves
the normal forms of types.
Thus, we define the hereditary substitution~\cite{Watkins02tr,Nanevski:TOCL08}
for types, which reduces type-level redices
(i.e.\ \(\tyForceWith{\tyCtxThunk{\hat\Psi}{A_l}}{\sigma}\) form)
occuring during the substitution. However, as this may increase the size of
the type to which we apply a substitution, we cannot use the size as
a termination measure for this operation. To give a termination measure for
this substitution, we use dependency-free context and kind:
% \[
%   \scriptsize
%   \begin{array}{llcl}
%     \text{Dependency-Free Kinds} & \eraseForSub{K_k}, \andothers & \syntaxDef & \kiType{k} \syntaxAlt \kiCtxUp{k}{l}{\eraseForSub{\Psi}}{\eraseForSub{K_l}}\\[0.5em]
%
%     \text{Dependency-Free Contexts} & \eraseForSub{\Gamma}, \andothers & \syntaxDef & \ctxEmpty \syntaxAlt \ctxCons{\eraseForSub{\Gamma}}{\alpha{:}\eraseForSub{K_k}} \syntaxAlt \ctxCons{\eraseForSub{\Gamma}}{x}\\[0.5em]
%   \end{array}
% \]
\[
%  \scriptsize
  \begin{array}{llcl}
    \text{Dependency-Free Kinds} & \kappa_k, \andothers & \syntaxDef & \kiType{k} \syntaxAlt \kiCtxUp{k}{l}{\psi}{\kappa_l}
\\
    \text{Dependency-Free Contexts} & \psi, \gamma, \andothers & \syntaxDef & \ctxEmpty \syntaxAlt \ctxCons{\gamma}{\alpha{:}\kappa_k} \syntaxAlt \ctxCons{\gamma}{x}
  \end{array}
\]
These dependency-free contexts/kinds give us the appropriate termination measure, e.g.\ they
structurally decrease when we define the substitution. We obtain these
dependency-free contexts/kinds from the usual contexts/kinds using a dependency erasure
\(\eraseForSub{\_}\) defined below.

\[
%  \scriptsize
  \begin{array}{lcllcl}
\multicolumn{3}{l}{\mbox{Context erasure}: \eraseForSub{\Psi} = \psi } &
\multicolumn{3}{l}{\mbox{Kind erasure}: \eraseForSub{K_l} = \kappa_l } \\
    \eraseForSub{\ctxEmpty} & = & \ctxEmpty &
            \eraseForSub{\kiType{k}} & = & \kiType{k}
\\
    \eraseForSub{\ctxCons{\Gamma}{\alpha{:}K_k}} & = & \ctxCons{\eraseForSub{\Gamma}}{\alpha{:}\eraseForSub{K_k}}\qquad\qquad&
            \eraseForSub{\kiCtxUp{k}{l}{\Psi}{K_l}} & = & \kiCtxUp{k}{l}{\eraseForSub{\Psi}}{\eraseForSub{K_l}}
\\
    \eraseForSub{\ctxCons{\Gamma}{x{:}A_k}} & = & \ctxCons{\eraseForSub{\Gamma}}{x} & & & 
  \end{array}
\]

Now, we can define the following substitution for normal and neutral types.
\[
%  \scriptsize
\newcommand{\jolt}{0em}
\newcommand{\joltp}{0.25em}
  \begin{array}{lcll}
\multicolumn{4}{l}{\mbox{Substitution for normal types:} \subNorm{\sigma}{\gamma}{A_k} = B_k}\\[0.5em]
    \subNorm{\sigma}{\gamma}{(\tyCtxUp{k}{l}{\Psi}{A_l})} & = & \tyCtxUp{k}{l}{\Psi'}{B_l} 
& \mbox{where}~\subNorm{\sigma}{\gamma}{\Psi} = \Psi' ~~\mbox{and}~~ \subNorm{\sigma}{\gamma}{A_l} = B_l\\[\jolt]
    \subNorm{\sigma}{\gamma}{(\tyDown{m}{k}{A_m})} & = & \tyDown{m}{k}{(B_m)}
& \mbox{where}~\subNorm{\sigma}{\gamma}{A_m} = B_m\\[\jolt]
    \subNorm{\sigma}{\gamma}{(\tyCtxThunk{\hat\Psi}{A_l})} & = & \tyCtxThunk{\hat\Psi}{B_l} 
& \mbox{where}~\subNorm{\sigma}{\gamma}{A_l} = B_l \\[\jolt]
    \subNorm{\sigma}{\gamma}{P_k} & = & A_k & \text{if } \subNeut{\sigma}{\gamma}{P_k} = A_k : \kappa_k\\[\jolt]
    \subNorm{\sigma}{\gamma}{P_k} & = & Q_k & \text{if } \subNeut{\sigma}{\gamma}{P_k} = Q_k\\[1em]
\multicolumn{4}{l}{\mbox{Substitution for neutral types:} \subNeut{\sigma}{\gamma}{P_k} = Q_k}\\[0.5em]
    \subNeut{\sigma}{\gamma}{\alpha} & = & A_k : \kappa_k &
\text{if } \alpha{\mapsto}A_k \in \sigma \text{ and } \alpha{:}\kappa_k \in \gamma\\[\joltp]
    \subNeut{\sigma}{\gamma}{\alpha} & = & \alpha & \text{otherwise}\\[\jolt]
    \subNeut{\sigma}{\gamma}{(\tyForceWith{P_m}{\tau})} & = & A'_k : \kappa_k 
& \text{if } \subNeut{\sigma}{\gamma}{P_m} = \tyCtxThunk{\hat\Psi}{A_k} : \kiCtxUp{m}{l}{\psi}{\kappa_k}\\
& & & \text{and}~\subNorm{\sigma}{\gamma}{\tau} = \tau'~~\text{and}~~ \subNorm{\tau'}{\psi}{A_k} = A'_k\\[\joltp]
    \subNeut{\sigma}{\gamma}{(\tyForceWith{P_m}{\tau})} & = & \tyForceWith{P'_m}{\tau'} & \text{if } \subNeut{\sigma}{\gamma}{P_m} = P'_m
~~\text{and}~~ \subNorm{\sigma}{\gamma}{\tau} = \tau'
  \end{array}
\]

In this definition, if \(\subNeut{\sigma}{\gamma}{P_k} = A_k : \kappa_k\), then
\(\kappa_k < \gamma\) holds, i.e. \(\kappa_k\) is structurally smaller than \(\gamma\). Thus, if $\subNeut{\sigma}{\gamma}{P_m} = \tyCtxThunk{\hat\Psi}{A_k} : \kiCtxUp{m}{l}{\psi}{\kappa_k}$,
we get that $\kiCtxUp{m}{l}{\psi}{\kappa_k} < \gamma$ and hence $\psi < \gamma$.
Therefore, $ \subNeut{\sigma}{\gamma}{(\tyForceWith{P_m}{\tau})}$ where we 
continue to apply the substitution to the result of $\subNeut{\sigma}{\psi}{P_m}$ is fine.
%\subNorm{\tau'}{\psi}{A_k}

We extend this substitution operation to the context $\Psi$ by
applying $\sigma$ to each declaration in $\Psi$.
When a substitution \(\sigma\) is applied to a kind, it affects only the
\(\kiCtxUp{k}{l}{\Psi}{K_l}\) case where we apply \(\sigma\) recursively to
\(\Psi\) and \(K_l\). We also extend the substitution operation to another
substitution $\tau$, e.g. \(\subNorm{\sigma}{\gamma}{\tau}\), by applying
\(\sigma\) to each entry in $\tau$. This relies on the application of $\sigma$
to a term, which we define below. 
%We also
%use this substitution operation in the definition of the operational semantics.
\[
%  \scriptsize
  \begin{array}{lcll}
\multicolumn{4}{l}{\mbox{Substitution for terms:} \subNeut{\sigma}{\gamma}{e} = e'}\\[0.5em]
    \subNorm{\sigma}{\gamma}{x} & = & e & \text{if } x{\mapsto_k}e \in \sigma\\
    \subNorm{\sigma}{\gamma}{x} & = & x & \text{otherwise}\\
    \subNorm{\sigma}{\gamma}{\tmCtxSusp{\hat\Psi}{e}} & = & \tmCtxSusp{\hat\Psi}{e'} &
\mbox{where}~\subNorm{\sigma}{\gamma}{e} = e'\\
    \subNorm{\sigma}{\gamma}{\tmForceWith{e}{\tau}} & = & \tmForceWith{e'}{\tau'} &
\mbox{where}~\subNorm{\sigma}{\gamma}{e} = e'
~\mbox{and}~\subNorm{\sigma}{\gamma}{\tau} = \tau'
\\
    \subNorm{\sigma}{\gamma}{\tmStore{e}} & = & \tmStore{e'} &
\mbox{where}~\subNorm{\sigma}{\gamma}{e} = e'
\\
    \subNorm{\sigma}{\gamma}{\tmLoad{x}{e}{f}} & = & \tmLoad{x}{e'}{f'} &
\mbox{where}~\subNorm{\sigma}{\gamma}{e}=e' ~\mbox{and}~\subNorm{\sigma}{\gamma}{f} = f'
  \end{array}
\]

Both of these substitutions may fail for arbitrary input terms.
For example, in the definition of \(\subNeut{\sigma}{\gamma}{(\tyForceWith{P_m}{\tau})}\),
if neither of the conditions is true, it fails. Nonetheless, these
substitutions terminate, as shown in the next theorem. Later, in~\Cref{lem:substitution},
we prove that a substitution succeeds and gives well-typed result for well-typed
substitution and term.
\begin{theorem}[Termination of Substitution]\mbox{}
  \begin{enumerate}
  \item \(\subNorm{\sigma}{\gamma}{A_k}\) terminates, i.e.\ it either produces a result or fails.
  \item \(\subNorm{\sigma}{\gamma}{e}\) terminates, i.e.\ it either produces a result or fails.
  \end{enumerate}
\end{theorem}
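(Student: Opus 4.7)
The plan is to prove termination by well-founded lexicographic induction on the pair $(\gamma, s)$, where $\gamma$ is the dependency-free context serving as the outer measure, and $s$ is the syntactic size of the input type/term/substitution as the inner measure. Well-foundedness of the outer order follows from the fact that dependency-free kinds and contexts form a finite tree structure under the sub-structure relation $<$; well-foundedness of the inner order is standard. I treat the three substitution operations (normal types $\subNorm{\sigma}{\gamma}{A_k}$, neutral types $\subNeut{\sigma}{\gamma}{P_k}$, and terms $\subNorm{\sigma}{\gamma}{e}$) as a mutual recursion, together with the pointwise extensions to $\Psi$ and to substitutions $\tau$.

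First I would dispatch the routine cases. For normal type substitution, every clause other than the hereditary step recurses on a strict syntactic subterm with the \emph{same} $\gamma$, so the inner measure alone suffices. The single critical case is $\subNeut{\sigma}{\gamma}{(\tyForceWith{P_m}{\tau})}$ when the inner call returns a typed result $\tyCtxThunk{\hat\Psi}{A_k} : \kiCtxUp{m}{l}{\psi}{\kappa_k}$: we must then invoke $\subNorm{\tau'}{\psi}{A_k}$ on a body $A_k$ whose size is a priori unrelated to that of the input. Here the outer measure strictly decreases from $\gamma$ to $\psi$, justified by an auxiliary invariant that I prove in lockstep with termination: whenever $\subNeut{\sigma}{\gamma}{P_k}$ produces a typed result $A_k : \kappa_k$, the reported kind satisfies $\kappa_k < \gamma$. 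The base case $\alpha \mapsto A_k$ with $\alpha{:}\kappa_k \in \gamma$ is immediate since $\kappa_k$ literally appears in $\gamma$; the neutral force case follows because the inductively reported kind $\kiCtxUp{m}{l}{\psi}{\kappa_k}$ is $<\gamma$, hence so are $\psi$ and $\kappa_k$. This invariant is exactly what licenses the recursive call $\subNorm{\tau'}{\psi}{A_k}$, and the outer induction hypothesis applies.

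Term substitution is easier: there is no hereditary reduction on terms, so every recursive call in the clauses for $\tmCtxSusp{\hat\Psi}{e}$, $\tmForceWith{e}{\tau}$, $\tmStore{e}$, and $\tmLoad{x}{e}{f}$ shrinks the syntactic size at fixed $\gamma$. The pointwise action of a substitution on another substitution $\tau$ reduces to calls of type substitution (for $\alpha\mapsto A_k$ entries, terminating by the first part) and term substitution (for $x\mapsto_k e$ entries, structurally smaller). Therefore plain structural induction on $e$, nested inside the outer induction shared with the type case, suffices.

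The main obstacle will be carefully maintaining the auxiliary kind-bound invariant while the definition is partial and case-analytic; in particular, the definition of $\subNeut{\sigma}{\gamma}{P_k}$ can either fail, return a bare neutral $Q_k$, or return a tagged pair $A_k : \kappa_k$, and only in the last branch does the invariant $\kappa_k < \gamma$ need to hold. Folding this invariant into the mutual induction \emph{without} appealing to the yet-to-be-proved well-typedness result (\Cref{lem:substitution}) requires reading the reported kind off the structure of $\gamma$ itself rather than off any typing derivation, so the statement of the invariant has to be kept purely syntactic.
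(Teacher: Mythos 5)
Your proposal is correct and follows essentially the same route as the paper: a lexicographic induction on the dependency-free context $\gamma$ and the syntactic size of the substitution target, with the hereditary case $\subNeut{\sigma}{\gamma}{(\tyForceWith{P_m}{\tau})}$ justified by the observation that the continuation context $\psi$ is a strict substructure of $\gamma$. The auxiliary invariant you propose to carry (that a typed result $A_k : \kappa_k$ always satisfies $\kappa_k < \gamma$) is exactly the fact the paper states informally just after the definition of the substitution operation, so your only addition is to make explicit that this invariant must be proved in lockstep with termination rather than assumed.
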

\begin{proof}
  By lexicographical induction on \(\gamma\) and the target of substitution (\(A_k\) or \(e\)).
  Note that whenever we change \(\gamma\) to \(\psi\), we have \(\psi\) is a part of \(\gamma\) and $\psi < \gamma$.
\end{proof}

The single substitution \(\subNorm{\alpha{\mapsto}A_m}{\alpha{:}\kappa_m}{}\) is
defined as a special case of the above definitions.

In addition to the actual substitution operation, we need a substitution split operation,
as we need to find a substitution with matching domain for the result of the context split
when we prove the substitution lemma.
\[
%  \scriptsize
  \begin{array}{l@{~\subMerge{}{}~}lcll}
    (\subCons{\sigma_1}{\alpha{\mapsto}A_k})& (\subCons{\sigma_2}{\alpha{\mapsto}A_k}) & = & \subCons{(\subMerge{\sigma_1}{\sigma_2})}{\alpha{\mapsto}A_k}\\
    (\subCons{\sigma_1}{\alpha{\mapsto}A_k}) & \sigma_2 & = & \subCons{(\subMerge{\sigma_1}{\sigma_2})}{\alpha{\mapsto}A_k}\\
    \sigma_1 & (\subCons{\sigma_2}{\alpha{\mapsto}A_k}) & = & \subCons{(\subMerge{\sigma_1}{\sigma_2})}{\alpha{\mapsto}A_k}\\
    (\subCons{\sigma_1}{x{\mapsto_k}e}) & (\subCons{\sigma_2}{x{\mapsto_k}e}) & = & \subCons{(\subMerge{\sigma_1}{\sigma_2})}{x{\mapsto_k}e} &\text{if } \modeCo \in \modeSig{k}\\
    (\subCons{\sigma_1}{x{\mapsto_k}e}) & \sigma_2 & = & \subCons{(\subMerge{\sigma_1}{\sigma_2})}{x{\mapsto_k}e}\\
    \sigma_1 & (\subCons{\sigma_2}{x{\mapsto_k}e}) & = & \subCons{(\subMerge{\sigma_1}{\sigma_2})}{x{\mapsto_k}e}\\
    \subEmpty & \subEmpty & = & \subEmpty\\
  \end{array}
\]
As in the context split operation, we allow duplication of types disregarding
the mode information and duplication of terms only for modes that allow
contraction (\(\modeCo\)).

We prove that the substitution split is commutative and associative, parallel to
the properties of the context split.

\begin{lemma}[Properties of Substitution Split]\mbox{}
  \begin{itemize}
  \item \emph{Commutative :} \(\subMerge{\sigma_1}{\sigma_2} = \sigma\) if and only if
    \(\subMerge{\sigma_2}{\sigma_1} = \sigma\)
  \item \emph{Associative :} \(\subMerge{(\subMerge{\sigma_1}{\sigma_2})}{\sigma_3} = \sigma\) if and only if
  \(\subMerge{\sigma_1}{(\subMerge{\sigma_2}{\sigma_3})} = \sigma\)
  \end{itemize}
\end{lemma}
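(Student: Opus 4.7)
The proof plan follows the structure used for the \emph{Properties of Context Split} lemma: I would proceed by induction on the resulting substitution $\sigma$ together with case analysis on the last defining clause of $\subMerge{}{}$ that produced it. The definition of $\subMerge{}{}$ is obtained from the one for $\ctxMerge{}{}$ by replacing each declaration $\alpha{:}K_k$ or $x{:}A_k$ with a mapping $\alpha{\mapsto}A_k$ or $x{\mapsto_k}e$, so the clause-by-clause structure is in direct correspondence and the arguments transport essentially verbatim.

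For commutativity, I would observe that the seven defining clauses come in symmetric groups: the two ``left-only'' clauses (for type and term mappings) mirror the two ``right-only'' clauses; the ``both sides'' clauses for $\alpha{\mapsto}A_k$ and $x{\mapsto_k}e$ are already symmetric in $\sigma_1$ and $\sigma_2$; and the empty clause is trivial. The only side condition, namely $\modeCo \in \modeSig{k}$ on the shared term clause, does not refer to the order of the arguments, so it survives swapping. The inductive hypothesis on the remainders then yields the swapped split, establishing both directions of the biconditional.

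For associativity, I would do a nested case analysis: first on the outermost clause producing $\subMerge{(\subMerge{\sigma_1}{\sigma_2})}{\sigma_3}$, and then on the clause producing the inner $\subMerge{\sigma_1}{\sigma_2}$. In each configuration I would determine which of $\sigma_1, \sigma_2, \sigma_3$ contain the current entry and rebuild the right-associated split $\subMerge{\sigma_1}{(\subMerge{\sigma_2}{\sigma_3})}$ using the matching clauses, invoking the induction hypothesis on the strictly smaller remainder of $\sigma$; the reverse implication is fully symmetric. When an entry appears in more than one of the three substitutions, the ``both sides'' clause fires on both sides of the equation, and since $\modeCo \in \modeSig{k}$ is preserved by any regrouping (it depends only on the mode $k$ of the mapping, not on the substitution structure), the condition is discharged uniformly.

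The main obstacle will simply be bookkeeping: enumerating the combinations of which arguments an entry appears in and checking that the appropriate clause fires after reassociation. There is no genuine mathematical difficulty beyond what was already handled in the context split proof, since the two definitions are in one-to-one correspondence and the only nontrivial side condition is identical in both.
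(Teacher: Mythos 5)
Your proposal is correct and follows essentially the same route as the paper, which proves the lemma by induction on \(\sigma\) with case analysis on the defining clauses of \(\subMerge{}{}\), in direct parallel to the context split lemma. The additional detail you give about the symmetry of the clauses and the order-independence of the \(\modeCo \in \modeSig{k}\) side condition is exactly the bookkeeping the paper's one-line proof leaves implicit.
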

\begin{proof}
  By induction on \(\sigma\) and case analysis on the definition of \(\subMerge{}{}\).
\end{proof}

\subsection{Type System and Substructurality}
The type system for \elevator{} uses the context split operation and substitution operation defined in the previous section.
We define when kinds, types, and contexts are well-formed in \Cref{fig:wfkinds}. We need
these well-formedness judgements since \elevator{} has contextual kinds and
type variables. 
In the following definitions, we use \(\Gamma \ge k\) (or \(k > \Gamma\)) for
the condition where every assumption in \(\Gamma\) has a mode greater than or
equal to \(k\) (or a mode less than \(k\)). Note that this condition trivially holds
when \(\Gamma\) is the empty context.
% We
%describe this using the well-formedness judgments
%(\(\vdash K_k \kindCls\) for kinds, \(\Gamma \vdash A_k : K_k\) for types,
%and \(\vdash \Gamma \ctxCls\) for contexts).
%
We presuppose the following for the well-formedness judgements:
\begin{itemize}
\item For \(\Gamma \vdash K_k \kindCls\),  \(\vdash \Gamma \ctxCls\) and \(\Gamma \ge k\)
\item For \(\Gamma \vdash A_k : K_k\),  \(\vdash \Gamma \ctxCls\) and \(\Gamma \ge k\) and \(\Gamma \vdash K_k \kindCls\)
\end{itemize}

\begin{figure}
  \centering
\[
%  \scriptsize
  \begin{array}{c}
    \multicolumn{1}{l}{\fbox{\(\Gamma \vdash K_k \kindCls\) when kind \(K_k\) is well-formed under context \(\Gamma\)}}\\[1em]
 \infer
    {\Gamma \vdash \kiType{k} \kindCls}
    {}
    \qquad
	\infer
    {\Gamma \vdash \kiCtxUp{k}{l}{\Psi}{K_l} \kindCls}
    {k > \Psi \ge l
    \qquad \vdash \ctxCons{\Gamma}{\Psi} \ctxCls
    \qquad \ctxCons{\Gamma}{\Psi} \vdash K_l \kindCls}
    \\[1em]
    \multicolumn{1}{l}{\fbox{\(\Gamma \vdash A_k : K_k\) when type \(A_k\) is a well-formed type of kind \(K_k\) under context \(\Gamma\)}}\\[1em]
	\infer
    {\Gamma \vdash \tmCtxSusp{\hat\Psi}{A_l} : \kiCtxUp{k}{l}{\Psi}{K_l}}
    {\ctxCons{\Gamma}{\Psi} \vdash A_l : K_l}
    \\[0.75em]
	\infer
    {\ctxMerge{\Gamma_1}{\Gamma_2} \vdash \tyForceWith{A_m}{\sigma} : \subNorm{\sigma}{\eraseForSub{\Psi}}{K_k}}
    {
    \begin{array}{ll}
      \Gamma_1 \ge m \quad \vdash \Gamma_1 \ctxCls
      & \vdash \ctxCons{(\ctxMerge{\Gamma_1}{\Gamma_2})}{\Psi} \ctxCls\\
      \Gamma_1 \vdash \tyCtxUp{m}{k}{\Psi}{K_k} \kindCls
      \quad \Gamma_1 \vdash A_m : \tyCtxUp{m}{k}{\Psi}{K_k}
      &  \ctxMerge{\Gamma_1}{\Gamma_2} \vdash \sigma : \Psi
    \end{array}}
    \\[0.75em]
	\infer
    {\Gamma \vdash \tyCtxUp{k}{l}{\Psi}{A_l} : \kiType{k}}
    {k > \Psi \ge l
    & \vdash \ctxCons{\Gamma}{\Psi} \ctxCls
    & \ctxCons{\Gamma}{\Psi} \vdash A_l : \kiType{l}}
    \qquad
	\infer
    {\ctxMerge{\Gamma_1}{\Gamma_2} \vdash \tyDown{m}{k}{A_m} : \kiType{k}}
    {\Gamma_1 \ge m \ge k
    &\vdash \Gamma_1 \ctxCls
    &\Gamma_1 \vdash A_m : \kiType{m}}
    \\[0.75em]
	\infer
    {\ctxMerge{\Gamma_1}{\Gamma_2} \vdash \tyForall{\alpha}{K_m}{A_k} : \kiType{k}}
    {\Gamma_1 \ge m \ge k
    &\vdash \Gamma_1 \ctxCls
    &\Gamma_1 \vdash K_m \kindCls
    &\ctxCons{(\ctxMerge{\Gamma_1}{\Gamma_2})}{\alpha{:}K_m} \vdash A_k : \kiType{k}}
    \\[0.75em]
    \infer
    {\Gamma \vdash \tyArr{A_k}{B_k} : \kiType{k}}
    {\Gamma \vdash A_k : \kiType{k}
    \qquad \Gamma \vdash B_k : \kiType{k}}
    \qquad
    \infer
    {\Gamma \vdash \tyOne{k} : \kiType{k}}
    {}
    \\[0.75em]
    \multicolumn{1}{l}{\fbox{\(\vdash \Gamma \ctxCls\) when \(\Gamma\) is a well-formed context}}\\[0.75em]
	\infer
    {\vdash \ctxEmpty \ctxCls}
    {}
    \qquad
	\infer
    {\vdash \ctxCons{(\ctxMerge{\Gamma_1}{\Gamma_2})}{\alpha{:}K_k} \ctxCls}
    {\Gamma_1 \ge k
    \qquad \vdash \Gamma_1 \ctxCls
    \qquad \vdash \ctxMerge{\Gamma_1}{\Gamma_2} \ctxCls
    \qquad \Gamma_1 \vdash K_k \kindCls}
\\[0.75em]
	\infer
    {\vdash \ctxCons{(\ctxMerge{\Gamma_1}{\Gamma_2})}{x{:}A_k} \ctxCls}
    {\Gamma_1 \ge k
    \qquad \vdash \Gamma_1 \ctxCls
    \qquad \vdash \ctxMerge{\Gamma_1}{\Gamma_2} \ctxCls
    \qquad \Gamma_1 \vdash A_k : \kiType{k}}
  \end{array}
\]  
  \caption{Well-Formedness of Kinds, Types, and Contexts}
  \label{fig:wfkinds}
\end{figure}

In each rule, we keep enough premises so that we propagate these
presuppositions. For example, in the following rule
for the well-formedness of \(\kiCtxUp{k}{l}{\Psi}{K_l}\),
\[
  \infer
  {\Gamma \vdash \kiCtxUp{k}{l}{\Psi}{K_l} \kindCls}
  {k > \Psi \ge l
    \qquad \vdash \ctxCons{\Gamma}{\Psi} \ctxCls
    \qquad \ctxCons{\Gamma}{\Psi} \vdash K_l \kindCls}
\]
we have \(\Psi \ge l\) and \(\vdash \ctxCons{\Gamma}{\Psi} \ctxCls\) as
premises since we presuppose that for \(\ctxCons{\Gamma}{\Psi} \vdash K_l \kindCls\).
This rule also has \(k > \Psi\) as a premise. This premise makes sure that
everything in a template, including the context \(\Psi\), can be treated
syntactically for mode \(k\). One important remark on \(k > \Psi \ge l\) is
about when \(\Psi\) is the empty context. If \(\Psi\) is empty, even though
\(k > \Psi\) and \(\Psi \ge l\) separately trivially holds, we still require
\(k \ge l\). This makes sure that, when the context is empty,
our contextual up-shift is compatible with the non-contextual
version~\cite{Jang:FSCD24}, which allows up-shift only from a lower mode
to a higher mode.

There are also rules where we can omit some premises because of the presuppositions. 
The following rule is such an example.
\[
  \infer
  {\Gamma \vdash \tmCtxSusp{\hat\Psi}{A_l} : \kiCtxUp{k}{l}{\Psi}{K_l}}
  {\ctxCons{\Gamma}{\Psi} \vdash A_l : K_l}
\]
Here, we do not check \(k > \Psi \ge l\) nor \(\vdash \Gamma,\Psi \ctxCls\),
as we can get them by inversion on the presupposed
\(\Gamma \vdash \kiCtxUp{k}{l}{\Psi}{K_l} \kindCls\).

The case of \(\tyForceWith{A_m}{\sigma}\) shows how we use context split operation
to provide a correct context.
\[
	\infer
    {\ctxMerge{\Gamma_1}{\Gamma_2} \vdash \tyForceWith{A_m}{\sigma} : \subNorm{\sigma}{\eraseForSub{\Psi}}{K_k}}
    {
    \begin{array}{ll}
      \Gamma_1 \ge m \quad \vdash \Gamma_1 \ctxCls
      & \vdash \ctxCons{(\ctxMerge{\Gamma_1}{\Gamma_2})}{\Psi} \ctxCls\\
      \Gamma_1 \vdash A_m : \tyCtxUp{m}{k}{\Psi}{K_k} \quad
      \Gamma_1 \vdash \tyCtxUp{m}{k}{\Psi}{K_k} \kindCls
      & \ctxMerge{\Gamma_1}{\Gamma_2} \vdash \sigma : \Psi
    \end{array}}
\]
In this rule, we need \(\Gamma_1 \ge m\) and \(\vdash \Gamma_1 \ctxCls\) for
\(\Gamma_1 \vdash A_m : \tyCtxUp{m}{k}{\Psi}{K_k}\). To obtain such a context,
we use the context split operation (\(\ctxMerge{\Gamma_1}{\Gamma_2}\)). However,
we never use \(\Gamma_2\) alone in this rule, and thus we do not have
\(\vdash \Gamma_2 \ctxCls\) as a premise. 
%This no-use of \(\Gamma_2\) is because
%we do not need to track repeated use in types/kinds. For example, 
In the above rule, the context for \(\sigma\) maps variables from $\Psi$ to \(\ctxMerge{\Gamma_1}{\Gamma_2}\) even though \(A_m\) and $K_k$ already use variables in \(\Gamma_1\). Hence, type variables might, for example, be occurring more than once in $\subNorm{\sigma}{\eraseForSub{\Psi}}{K_k}$. 
We allow this because types and kinds are compile-time concepts,
and therefore we never consider an assumption being consumed in types.

The rule for type abstraction \(\tyForall{\alpha}{K_m}{A_k}\) is also interesting.
Note how it abstracts over a type variable of a kind in a higher mode \(m \ge k\).
This general abstraction gives us the power to use type templates across
multiple modes. In fact, without this general abstraction, we cannot form
polymorphic program generators that require cross-stage persistent type variables
as in \Cref{subsec:motivation-poly-code-gen}.

% On the other hand, in the typing judgement for expressions, this operation
% will handle a mode not allowing contraction as well so that it avoids repeated
% use of a linear assumption, but this treatment is only for
% expressions not for type.

We describe the type system for terms in \Cref{fig:exptyping}. In the typing rules,  \(\Gamma_W\)  represents a context all of whose entries are
in modes with weakening. In other words, it only comprises assumptions that we
do not need to use.

\begin{figure}
  \centering
\[
%  \scriptsize
\small
  \begin{array}{c}
    \multicolumn{1}{l}{\fbox{\(\Gamma \vdash e : A_k\) when term \(e\) is of type \(A_k\) under context \(\Gamma\)}}\\[1em]
	\infer
    {\ctxMerge{x:A_k}{\Gamma_W} \vdash x : A_k}
    {}
    \qquad
	\infer
    {\Gamma \vdash \tmCtxSusp{\hat\Psi}{e} : \tyCtxUp{k}{l}{\Psi}{A_l}}
    {\ctxCons{\Gamma}{\Psi} \vdash e : A_l}
    \qquad
	\infer
    {\ctxMerge{\Gamma}{\Gamma_W} \vdash \tmStore{e} : \tyDown{m}{k}{A_m}}
    {\Gamma \ge m
    \qquad \vdash \Gamma \ctxCls
    \qquad \Gamma \vdash e : A_m}
    \\[0.75em]
	\infer
    {\Gamma \vdash \tmTLam{\alpha}{K_m}{e} : \tyForall{\alpha}{K_m}{A_k}}
    {\ctxCons{\Gamma}{\alpha{:}K_m} \vdash e : A_k}
    \qquad
	\infer
    {\Gamma \vdash \tmLam{x}{A_k}{e} : \tyArr{A_k}{B_k}}
    {\ctxCons{\Gamma}{x{:}A_k} \vdash e : B_k}
    \qquad
    \infer
    {\Gamma_W \vdash \tmOne : \tyOne{k}}
    {}
    \\[0.75em]
	\infer
    {\ctxMerge{\Gamma_1}{\Gamma_2} \vdash \tmForceWith{e}{\sigma} : \subNorm{\sigma}{\eraseForSub{\Psi}}{A_k}}
    {
    \begin{array}{ll}
      \vdash \Gamma_1 \ctxCls \quad \Gamma_1 \ge m
      & \vdash \ctxCons{\Gamma_2}{\Psi} \ctxCls\\
      \Gamma_1 \vdash \tyCtxUp{m}{k}{\Psi}{A_k} : \kiType{m}
      \quad \Gamma_1 \vdash e : \tyCtxUp{m}{k}{\Psi}{A_k}
      & \Gamma_2 \vdash \sigma : \Psi
    \end{array}
}
    \\[0.75em]
	\infer
    {\ctxMerge{\Gamma_1}{\Gamma_2} \vdash \tmLoad{x}{e}{f} : B_k}
    {
    \begin{array}{ll}
      \vdash \Gamma_1 \ctxCls\qquad     \Gamma_1 \ge n \ge k
      & \vdash \ctxCons{\Gamma_2}{x{:}A_m} \ctxCls\\
      \Gamma_1 \vdash \tyDown{m}{n}{A_m} : \kiType{m}
      \qquad \Gamma_1 \vdash e : \tyDown{m}{n}{A_m}
      & \ctxCons{\Gamma_2}{x{:}A_m} \vdash f : B_k
    \end{array}
    }
    \\[0.75em]
	\infer
    {\ctxMerge{\Gamma_1}{\Gamma_2} \vdash \tmTApp{e}{A_m} : \subNorm{\alpha{\mapsto}A_m}{\alpha:\eraseForSub{K_m}}{B_k}}
    {
    \begin{array}{ll}
      & \vdash \Gamma_1 \ctxCls \qquad \Gamma_1 \ge m \ge k\\
      \ctxMerge{\Gamma_1}{\Gamma_2} \vdash \tyForall{\alpha}{K_m}{B_k} : \kiType{k}
      \qquad \ctxMerge{\Gamma_1}{\Gamma_2} \vdash e : \tyForall{\alpha}{K_m}{B_k}
      & \Gamma_1 \vdash K_m \kindCls
        \qquad \Gamma_1 \vdash A_m : K_m
    \end{array}
    }
    \\[0.75em]
	\infer
    {\ctxMerge{\Gamma_1}{\Gamma_2} \vdash \tmApp{e}{f} : B_k}
    {
    \begin{array}{ll}
      \vdash \Gamma_1 \ctxCls
      & \vdash \Gamma_2 \ctxCls
      \\
      \Gamma_1 \vdash \tyArr{A_k}{B_k} : \kiType{k}
      \quad \Gamma_1 \vdash e : \tyArr{A_k}{B_k}
      & \Gamma_2 \vdash A_k : \kiType{k}
        \quad \Gamma_2 \vdash f : A_k
    \end{array}
    }
    \\[1em]
    \multicolumn{1}{l}{\fbox{\(\Gamma \vdash \sigma : \Psi\) when \(\sigma\) is a well-typed substitution from \(\Psi\) to \(\Gamma\)}}\\[1em]
	\infer
    {\Gamma_W \vdash \subEmpty : \ctxEmpty}
    {}
    \qquad
	\infer
    {\ctxMerge{\Gamma_1}{\Gamma_2} \vdash \subCons{\sigma}{\alpha{\mapsto}A_k} : \ctxCons{\Psi}{\alpha{:}K_k}}
    {\Gamma_1 \ge k
    \qquad \ctxMerge{\Gamma_1}{\Gamma_2} \vdash \sigma : \Psi
    \qquad \vdash \Gamma_1 \ctxCls
    \qquad \Gamma_1 \vdash \subNorm{\sigma}{\eraseForSub{\Psi}}{K_k} \kindCls
    \qquad \Gamma_1 \vdash A_k : \subNorm{\sigma}{\eraseForSub{\Psi}}{K_k}}
    \\[0.75em]
	\infer
    {\ctxMerge{\Gamma_1}{\Gamma_2}\vdash \subCons{\sigma}{x{\mapsto_k}e} : \ctxCons{\Psi}{x{:}A_k}}
    {\Gamma_1 \ge k
    \qquad \vdash  \Gamma_2 \ctxCls
    \qquad \Gamma_2 \vdash \sigma : \Psi
    \qquad \vdash  \Gamma_1 \ctxCls
    \qquad \Gamma_1 \vdash \subNorm{\sigma}{\eraseForSub{\Psi}}{A_k} : \kiType{k}
    \qquad \Gamma_1 \vdash e : \subNorm{\sigma}{\eraseForSub{\Psi}}{A_k}}
  \end{array}
\]
  
  \caption{Typing Rules for Terms and Substitutions}
  \label{fig:exptyping}
\end{figure}

For the typing judgements, we presuppose the following:
\begin{itemize}
\item For \(\Gamma \vdash e : A_k\),  \(\vdash \Gamma \ctxCls\) and \(\Gamma \ge k\) and \(\Gamma \vdash A_k : \kiType{k}\)
\item For \(\Gamma \vdash \sigma : \Delta\),  \(\vdash \ctxCons{\Gamma}{\Delta} \ctxCls\)
\end{itemize}

As in the well-formedness case, we have premises to propagate the presuppositions.
For instance, in the typing rule for \(\tmLoad{x}{e}{f}\),
\[
  \infer
  {\ctxMerge{\Gamma_1}{\Gamma_2} \vdash \tmLoad{x}{e}{f} : B_k}
  {
    \begin{array}{ll}
      \vdash \Gamma_1 \ctxCls\qquad     \Gamma_1 \ge n \ge k
      & \vdash \ctxCons{\Gamma_2}{x{:}A_m} \ctxCls\\
      \Gamma_1 \vdash \tyDown{m}{n}{A_m} : \kiType{m}
      \qquad \Gamma_1 \vdash e : \tyDown{m}{n}{A_m}
      & \ctxCons{\Gamma_2}{x{:}A_m} \vdash f : B_k
    \end{array}
  }
\]
we have \(\Gamma_1 \ge n\), \(\vdash \Gamma_1\) as premises so that
we presuppose those for \(\Gamma_1 \vdash \tyDown{m}{n}{A_m} : \kiType{m}\)
and \(\Gamma_1 \vdash e : \tyDown{m}{n}{A_m}\). Likewise, we have
\(\vdash \ctxCons{\Gamma_2}{x{:}A_m} \ctxCls\) for
\(\ctxCons{\Gamma_2}{x{:}A_m} \vdash f : B_k\).

Note that we use both \(\Gamma_1\) and \(\Gamma_2\) separately
in the above rule. This is different from the well-formedness rules for contextual types. 
While in the latter case we do not track usage to support a notion of cross-stage persistence for type variables, for terms,
we must track usage to provide memory access guarantees. In particular, 
we should not use a term variable in a mode without contraction.
Thus, we use \(\Gamma_2\) which contains variables that can be
legitimately used in \(f\), either because those are not used in
\(e\) or because those are reusable due to their modes.

This distinction between terms and types regarding usage tracking
affects all rules where terms contain more than two subterms, and
we have to distribute our assumptions when checking terms. If
there are multiple subterms, we split the typing context.
What happens if a term contains a type as a subterm as in a type application \(\tmTApp{e}{A_m}\)?
\[
  \infer
  {\ctxMerge{\Gamma_1}{\Gamma_2} \vdash \tmTApp{e}{A_m} : \subNorm{\alpha{\mapsto}A_m}{\alpha:\eraseForSub{K_m}}{B_k}}
  {
    \begin{array}{ll}
      & \vdash \Gamma_1 \ctxCls \qquad \Gamma_1 \ge m \ge k\\
      \ctxMerge{\Gamma_1}{\Gamma_2} \vdash \tyForall{\alpha}{K_m}{B_k} : \kiType{k}
      \qquad \ctxMerge{\Gamma_1}{\Gamma_2} \vdash e : \tyForall{\alpha}{K_m}{B_k}
      & \Gamma_1 \vdash K_m \kindCls
        \qquad \Gamma_1 \vdash A_m : K_m
    \end{array}
  }
\]
Again, we do not count the usage of assumptions in \(A_m\).
Only \(e\) can use an assumption, so we keep the context as-is for
\(\ctxMerge{\Gamma_1}{\Gamma_2} \vdash e : \tyForall{\alpha}{K_m}{B_k}\).
The typing judgement for terms guarantees that the variable references follow substructural rules in their modes.
\begin{theorem}[Substructurality of Variable References]
  If \(\;\Gamma \vdash e : A_k\) and \(x{:}B_m \in \Gamma\) and \(\modeCo \not\in \modeSig{m}\), \(x\) occurs in \(e\) at most once.
  Likewise, if \(\modeWk \not\in \modeSig{m}\), \(x\) occurs in \(e\) at least once.
\end{theorem}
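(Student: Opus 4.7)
The plan is to proceed by structural induction on the derivation of $\Gamma \vdash e : A_k$, carrying along a mutual claim about substitutions: for any $\Gamma \vdash \sigma : \Psi$ and $x{:}B_m \in \Gamma$, the total occurrences of $x$ across all term entries of $\sigma$ satisfy the same bounds. The key facts driving every case are the two discipline invariants built into contexts: (i) $x{:}B_m$ can appear in both components of a split $\ctxMerge{\Gamma_1}{\Gamma_2}$ only when $\modeCo \in \modeSig{m}$, and (ii) $x{:}B_m$ can appear in a weakening context $\Gamma_W$ only when $\modeWk \in \modeSig{m}$. Every structural step reduces to applying these two facts to the split chosen by the rule and combining with the induction hypotheses.

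The base cases are quick. The variable rule $\ctxMerge{x{:}A_k}{\Gamma_W} \vdash x : A_k$ gives $x$ occurring exactly once in itself, while any other declaration $y{:}B_m$ must live in $\Gamma_W$, hence $\modeWk \in \modeSig{m}$, so $y$ trivially satisfies both bounds. The unit case is analogous. For a representative inductive case, consider $\ctxMerge{\Gamma_1}{\Gamma_2} \vdash \tmApp{e}{f} : B_k$ with $x{:}B_m$ in the context. If $\modeCo \notin \modeSig{m}$, the definition of $\ctxMerge{}{}$ forces $x$ into exactly one of the $\Gamma_i$; the induction hypothesis bounds its occurrences in that subterm by one and gives zero on the other side, for at most one occurrence in total. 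If $\modeWk \notin \modeSig{m}$, then $x$ appears in at least one $\Gamma_i$, and the induction hypothesis supplies at least one occurrence there. The rules using $\ctxMerge{\Gamma}{\Gamma_W}$ (such as $\tmStore{e}$) absorb only weakening-admitting declarations into $\Gamma_W$, never one that violates weakening. Binding cases ($\tmLam{y}{A_k}{e}$, $\tmTLam{\alpha}{K_m}{e}$, $\tmLoad{y}{e}{f}$) invoke the induction hypothesis in an extended context, appealing to the usual $\alpha$-convention so that the bound variable differs from $x$.

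The only real subtlety is that terms may embed substitutions via $\tmForceWith{e}{\sigma}$, so occurrences of $x$ inside the term components of $\sigma$ must be counted as well; this is why the substitution statement must be proved simultaneously. The substitution split $\subMerge{}{}$ behaves exactly like $\ctxMerge{}{}$ with respect to contraction and weakening, so each substitution rule mirrors its term counterpart without new difficulty. Type applications $\tmTApp{e}{A_m}$, type annotations on lambdas, and the embedded type in $\tmForceWith{e}{\sigma}$'s kind do not participate in term-variable usage tracking and hence contribute no occurrences of $x$; those cases reduce immediately to the induction hypothesis on the term component. Altogether the work is bookkeeping across context splits, and the mutual-induction setup with substitutions is the one design choice that makes everything go through cleanly.
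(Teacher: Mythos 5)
Your proof is correct and takes essentially the same route as the paper's: induction on the typing derivation, using the contraction/weakening conditions built into \(\ctxMerge{}{}\) and \(\Gamma_W\), together with the observation that types, kinds, and contexts do not contribute term-variable occurrences. The paper states this in one line; your elaborations \--- the mutual induction with substitution typing (needed for \(\tmForceWith{e}{\sigma}\)) and the case-by-case bookkeeping over context splits \--- are exactly what that one line unfolds to.
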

\begin{proof}
  By induction on the typing derivation. Note that a type, kind, or context cannot refer to a term variable.
\end{proof}

The well-formedness/typing judgements are stable under substitutions.
To prove this stability, we first need to establish some properties of the substitution split operation regarding well-typed objects.
\begin{lemma}[Splitting]\label{lem:splitting}
  For \(\Gamma \vdash \sigma : \ctxMerge{\Delta_1}{\Delta_2}\), there exists \(\Gamma_1\), \(\Gamma_2\), \(\sigma_1\) and \(\sigma_2\) such that \(\sigma = \subMerge{\sigma_1}{\sigma_2}\) and \(\Gamma = \ctxMerge{\Gamma_1}{\Gamma_2}\) and \(\Gamma_1 \vdash \sigma_1 : \Delta_1\) and  \(\Gamma_2 \vdash \sigma_2 : \Delta_2\).
\end{lemma}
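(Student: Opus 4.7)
The plan is to proceed by induction on the derivation of $\Gamma \vdash \sigma : \ctxMerge{\Delta_1}{\Delta_2}$ (equivalently, by structural induction on $\sigma$, since each substitution constructor uniquely selects a typing rule). At every step I inspect how the last entry of $\ctxMerge{\Delta_1}{\Delta_2}$ can have arisen according to the definition of context split --- it occurs in both $\Delta_1$ and $\Delta_2$, only in $\Delta_1$, or only in $\Delta_2$ --- and dispatch into these three subcases.

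The base case $\sigma = \subEmpty$ is forced: inversion yields $\Gamma = \Gamma_W$ and $\ctxMerge{\Delta_1}{\Delta_2} = \ctxEmpty$, whence $\Delta_1 = \Delta_2 = \ctxEmpty$, and I take $\Gamma_1 = \Gamma_2 = \Gamma_W$ with $\sigma_1 = \sigma_2 = \subEmpty$. The type-extension case $\sigma = \subCons{\sigma'}{\alpha{\mapsto}A_k}$ is also easy: inversion gives $\Gamma = \ctxMerge{\Gamma_A}{\Gamma'}$ and a sub-derivation $\ctxMerge{\Gamma_A}{\Gamma'} \vdash \sigma' : \Psi$ with $\ctxCons{\Psi}{\alpha{:}K_k} = \ctxMerge{\Delta_1}{\Delta_2}$. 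I apply the IH to the sub-derivation and then extend the relevant side(s) of the resulting split with $\alpha{\mapsto}A_k$; because type assumptions can be duplicated unconditionally in $\ctxMerge{}{}$, the both-sides subcase for $\alpha$ needs nothing beyond the commutativity and associativity of context split.

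The interesting case is the term extension $\sigma = \subCons{\sigma'}{x{\mapsto_k}e}$, where inversion splits $\Gamma$ as $\ctxMerge{\Gamma_e}{\Gamma_{\sigma'}}$ with $\Gamma_e \ge k$, $\Gamma_e \vdash e : \ldots$, and $\Gamma_{\sigma'} \vdash \sigma' : \Psi$. Applying the IH to the sub-derivation for $\sigma'$ gives $\Gamma_{\sigma'} = \ctxMerge{\Gamma_1^{\star}}{\Gamma_2^{\star}}$ together with $\sigma' = \subMerge{\sigma_1'}{\sigma_2'}$. In the two single-side subcases, the new entry $x{\mapsto_k}e$ is attached to exactly one side and $\Gamma_e$ is merged onto that same side, after which the typing premises for $\sigma_1$ and $\sigma_2$ hold immediately.

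The main obstacle is the both-sides subcase of the term extension, which by definition of context split can only arise when $\modeCo \in \modeSig{k}$. Here I must insert $x{\mapsto_k}e$ into \emph{both} $\sigma_1$ and $\sigma_2$, which forces $e$ to be typable in two separate halves of a fresh split of $\Gamma$. The key auxiliary fact is $\ctxMerge{\Gamma_e}{\Gamma_e} = \Gamma_e$: since $\Gamma_e \ge k$ and $\modeCo \in \modeSig{k}$, the signature-monotonicity condition $m \ge k \Rightarrow \modeSig{m} \supseteq \modeSig{k}$ guarantees that every assumption in $\Gamma_e$ lives at a mode admitting contraction, and a short induction on $\Gamma_e$ using only the ``both'' clauses of $\ctxMerge{}{}$ establishes the identity. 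Combined with the commutativity/associativity lemma, I rewrite $\Gamma = \ctxMerge{\Gamma_e}{\ctxMerge{\Gamma_1^{\star}}{\Gamma_2^{\star}}}$ as $\ctxMerge{\ctxMerge{\Gamma_e}{\Gamma_1^{\star}}}{\ctxMerge{\Gamma_e}{\Gamma_2^{\star}}}$ and choose $\Gamma_1 = \ctxMerge{\Gamma_e}{\Gamma_1^{\star}}$, $\Gamma_2 = \ctxMerge{\Gamma_e}{\Gamma_2^{\star}}$ with $\sigma_1 = \subCons{\sigma_1'}{x{\mapsto_k}e}$ and $\sigma_2 = \subCons{\sigma_2'}{x{\mapsto_k}e}$. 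This is the step where the mode-signature discipline genuinely does work; everywhere else the argument is routine bookkeeping against the inversion of the substitution typing rules.
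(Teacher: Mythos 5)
Your proof follows the paper's argument exactly: induction on the typing derivation of \(\sigma\), with the crux being the duplicated-term-variable case, where the signature condition \(m \ge k \Rightarrow \modeSig{m} \supseteq \modeSig{k}\) is what lets you duplicate all of \(\Gamma_e\) (the paper's proof is a one-line sketch that flags precisely this point). One small slip: in the base case, \(\Gamma_1 = \Gamma_2 = \Gamma_W\) does not in general satisfy \(\ctxMerge{\Gamma_W}{\Gamma_W} = \Gamma_W\), since weakening does not imply contraction (e.g.\ an affine term variable cannot be sent to both sides); take \(\Gamma_1 = \Gamma_W\) and \(\Gamma_2 = \ctxEmpty\) (or any partition of \(\Gamma_W\)) instead, and everything goes through.
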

\begin{proof}
  By induction on the typing derivation for \(\sigma\).
  Note that the condition on the mode signature (\(m \geq k \text{ then } \modeSig{m} \subseteq \modeSig{k}\))
  is critical here. This is because whenever \(\Delta_1\) and \(\Delta_2\) duplicate a
  variable, all variables in \(\Gamma\) that \(\sigma\) uses to instantiate the duplicated
  variable should be also duplicated.
\end{proof}

This lemma states that when we split the domain of a well-typed substitution,
we can split it into two well-typed substitutions of the split domains.
We also need the following lemma, which states that the split substitution has
the same effect as the original substitution if the target is well-typed under
the split domain.

\begin{lemma}[Effect of Splitting]\label{lem:effect-of-splitting}
  For \(\ctxMerge{\Gamma_1}{\Gamma_2} \vdash \sigma : \ctxMerge{\Delta_1}{\Delta_2}\), \(\Gamma_1 \vdash \sigma_1 : \Delta_1\), \(\sigma = \ctxMerge{\sigma_1}{\sigma_2}\), $\eraseForSub{\ctxMerge{\Delta_1}{\Delta_2}} = \delta$, and $\eraseForSub{\Delta_1} = \delta_1$,
  \begin{enumerate}
  \item If \(\vdash \Gamma_1, \Psi \ctxCls\) then \(\subNorm{\sigma}{\delta}{\Psi} = \subNorm{\sigma_1}{\delta_1}{\Psi}\)
  \item If \(\Gamma_1 \vdash K_k \kindCls\) then \(\subNorm{\sigma}{\delta}{K_k} = \subNorm{\sigma_1}{\delta_1}{K_k}\)
  \item If \(\Gamma_1 \vdash T_k : K_k\) then \(\subNorm{\sigma}{\delta}{T_k} = \subNorm{\sigma_1}{\delta_1}{T_k}\)
  \item If \(\Gamma_1 \vdash e : T_k\) then \(\subNorm{\sigma}{\delta}{e} = \subNorm{\sigma_1}{\delta_1}{e}\)
  \end{enumerate}
\end{lemma}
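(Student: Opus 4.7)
The plan is to prove statements~(1)--(4) by simultaneous mutual induction, ordered lexicographically on the dependency-free context (the $\delta$ component) and the structural size of the target, exactly mirroring the measure that ensures termination of the substitution operation itself. Intuitively, since the target depends only on the $\Gamma_1$-portion of variables, the extra mappings supplied by $\sigma_2$ are never consulted during the hereditary substitution, so $\sigma$ and $\sigma_1$ must agree everywhere.

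For each compound constructor I would invert the well-formedness or typing derivation, use \Cref{lem:splitting} together with commutativity/associativity of the substitution split to align the premises, and then apply the appropriate induction hypothesis componentwise. This handles $\kiCtxUp{k}{l}{\Psi'}{K_l}$ for kinds; $\tyCtxUp{k}{l}{\Psi'}{A_l}$, $\tyDown{m}{k}{A_m}$, $\tyCtxThunk{\hat\Psi'}{A_l}$, $\tyForall{\alpha}{K_m}{A_k}$, $\tyArr{A_k}{B_k}$, and $\tyOne{k}$ for types; and the routine term formers. Statement~(1) then reduces to entry-wise induction on $\Psi$, chaining statements~(2) and~(3) on the annotations. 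The nontrivial leaves are the variable cases: for a free type variable $\alpha$, the typing premise places $\alpha$ in the portion of the domain covered by $\sigma_1$, so unfolding $\subNeut{\sigma}{\delta}{\alpha}$ using $\sigma = \subMerge{\sigma_1}{\sigma_2}$ together with the clauses defining the substitution split shows that $\sigma$ consults exactly $\sigma_1$. The term-variable case is symmetric.

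The main obstacle will be the hereditary force case $\tyForceWith{P_m}{\tau}$ (and its term-level counterpart $\tmForceWith{e}{\tau}$). After the IH reduces $P_m$ and $\tau$ to the same intermediate objects on both sides, if the substituted $P_m$ remains a neutral $P'_m$ the result follows by congruence. If instead the substituted $P_m$ is a suspended $\tyCtxThunk{\hat\Psi'}{A'}$ of kind $\kiCtxUp{m}{l}{\psi'}{\kappa'}$, then the definition triggers a secondary hereditary substitution of the transformed $\tau'$ into $A'$ under the strictly smaller dependency-free context $\psi'$. I would invoke the induction hypothesis at this smaller measure, which the lexicographic ordering permits; the precondition that $\tau'$ splits compatibly on the smaller context is furnished by statement~(1) applied to the substitution, together with \Cref{lem:splitting}. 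Chaining the intermediate equalities then yields the desired final identity, and the term-level force case goes through by the same pattern.
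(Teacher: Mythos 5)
Your proposal is correct and follows essentially the same route as the paper, which proves the lemma by lexicographic induction on the erased contexts \(\delta\), \(\delta_1\) and the derivation for the target, with the variable and hereditary-force cases carrying the real content exactly as you describe. One minor simplification: in the force case the induction hypotheses already make the intermediate \(\tau'\), \(\psi\), and the suspended body literally identical on both sides, so the secondary hereditary substitution is syntactically the same computation and needs no further appeal to the induction hypothesis.
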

\begin{proof}
  By lexicographical induction on \(\delta\), \(\delta_1\), and the typing derivation of the target of the substitution.
\end{proof}

With these two lemmas, we can now prove the substitution lemma.
\begin{lemma}[Substitution Lemma]\label{lem:substitution} If \(\Gamma_2 \vdash \sigma : \Delta\) and $\eraseForSub{\Delta} = \delta$,
  \begin{enumerate}[label=(\arabic*), ref=(\arabic*)]
  \item\label{itm:substitution-context}
    If \(\vdash \ctxCons{\Gamma_1}{\ctxCons{\Delta}{\Psi}} \ctxCls\)
    then \(\vdash \ctxCons{\Gamma_1}{\ctxCons{\Gamma_2}{\subNorm{\sigma}{\delta}{(\Psi)}}} \ctxCls\)
  \item\label{itm:substitution-kind}
    If \(\ctxCons{\Gamma_1}{\ctxCons{\Delta}{\Psi}} \vdash K_k \kindCls\)
    then \(\ctxCons{\Gamma_1}{\ctxCons{\Gamma_2}{\subNorm{\sigma}{\delta}{(\Psi)}}} \vdash \subNorm{\sigma}{\delta}{(K_k)} \kindCls\)
  \item\label{itm:substitution-type}
    If \(\ctxCons{\Gamma_1}{\ctxCons{\Delta}{\Psi}} \vdash A_k : K_k\)
    then \(\ctxCons{\Gamma_1}{\ctxCons{\Gamma_2}{\subNorm{\sigma}{\delta}{(\Psi)}}} \vdash \subNorm{\sigma}{\delta}{(A_k)} : \subNorm{\sigma}{\delta}{(K_k)}\)
  \item\label{itm:substitution-term}
    If \(\ctxCons{\Gamma_1}{\ctxCons{\Delta}{\Psi}} \vdash e : A_k\)
    then \(\ctxCons{\Gamma_1}{\ctxCons{\Gamma_2}{\subNorm{\sigma}{\delta}{(\Psi)}}} \vdash \subNorm{\sigma}{\delta}{e} : \subNorm{\sigma}{\delta}{(A_k)}\)
  \item\label{itm:substitution-substitution}
    If \(\ctxCons{\Gamma_1}{\ctxCons{\Delta}{\Psi}} \vdash \tau : \Phi\)
    then \(\ctxCons{\Gamma_1}{\ctxCons{\Gamma_2}{\subNorm{\sigma}{\delta}{(\Psi)}}} \vdash \subNorm{\sigma}{\delta}{\tau} : \subNorm{\sigma}{\delta}{(\Phi)}\)
  \end{enumerate}
\end{lemma}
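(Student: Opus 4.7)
The plan is to prove all five clauses simultaneously by a mutual induction, with the primary measure being the derivation of the target judgement on the left-hand side and a secondary lexicographic component on the dependency-free kind/context erasure $\delta = \eraseForSub{\Delta}$. The latter component is needed because hereditary substitution on types can trigger a cascading substitution when a neutral becomes a thunk, and we must preserve typing throughout this cascade.

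For each rule, the strategy is to invert the derivation of the target, apply the induction hypotheses to the premises, and re-apply the corresponding rule to build the substituted derivation. When the target rule splits its context as $\ctxMerge{\Gamma_1'}{\Gamma_2'}$ (e.g., in $\tyForceWith{A_m}{\tau}$, $\tmForceWith{e}{\tau}$, $\tmApp{e}{f}$, $\tmTApp{e}{A_m}$, $\tmLoad{x}{e}{f}$, and the context-extension rules), I first extend the context split to include $\Gamma_1$ and $\Psi$ on appropriate sides, then invoke \Cref{lem:splitting} to decompose $\sigma$ accordingly as $\sigma = \subMerge{\sigma_1'}{\sigma_2'}$ so that each piece matches one side of the split. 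I then invoke the IH on each subderivation with the matching substitution piece, and finally appeal to \Cref{lem:effect-of-splitting} to rewrite instances of $\subNorm{\sigma_i'}{\delta_i'}{\_}$ occurring in premises and conclusions back into $\subNorm{\sigma}{\delta}{\_}$ so the rule can be re-applied with a well-formed context. The presupposition-propagation premises ensure that the context and kinding side-conditions needed to re-apply the rule are available as IHs of clauses~\ref{itm:substitution-context}--\ref{itm:substitution-kind}.

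The main obstacle is clause~\ref{itm:substitution-type} in the case where the target is a neutral type $\tyForceWith{P_m}{\tau}$ and the recursive substitution on $P_m$ returns a thunk $\tyCtxThunk{\hat\Psi'}{A_k'}$ rather than another neutral. The definition of hereditary substitution then fires a second substitution $\subNorm{\tau''}{\eraseForSub{\Psi''}}{A_k'}$ on a strictly smaller dependency-free context, and this second application must remain well-typed. This is exactly where the secondary measure on $\delta$ pays off: the IH on the outer substitution produces a typing derivation of the thunk, inversion on that derivation yields a typing of $A_k'$ under $\Psi''$, and since $\eraseForSub{\Psi''}$ is a structural sub-component of $\delta$, clause~\ref{itm:substitution-type} is available at the strictly smaller secondary measure, closing the case. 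The analogous term-level case $\tmForceWith{e}{\tau}$ is easier, since substitution on terms is ordinary and never triggers a further redex.

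The variable cases ($\alpha$ and $x$) split on whether the reference lies in $\Gamma_1$, $\Delta$, or $\Psi$. In the $\Delta$-case the conclusion follows directly by inversion on $\sigma$'s typing derivation, recalling that for term variables we also need the mode-signature condition to supply a matching context split, while for type variables the exemption from usage tracking in $\ctxMerge{}{}$ makes the argument immediate. The $\Gamma_1$- and $\Psi$-cases reduce to essentially identity substitutions on fresh variables. Finally, clause~\ref{itm:substitution-substitution} is a routine induction on the structure of the substitution $\tau$, applying clauses~\ref{itm:substitution-type} and~\ref{itm:substitution-term} to each entry and using \ref{itm:substitution-context} and \ref{itm:substitution-kind} to rebuild the codomain.
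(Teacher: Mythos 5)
Your overall strategy is the paper's: a lexicographic induction combining the erasure $\delta$ with the derivation, invoking \Cref{lem:splitting} and \Cref{lem:effect-of-splitting} whenever a rule splits its context, and treating the variable and substitution clauses as you describe. However, you have stated the lexicographic order the wrong way around, and the order matters at exactly the case you single out as the main obstacle. You take the derivation as the \emph{primary} component and $\delta$ as the \emph{secondary} one, and then in the $\tyForceWith{P_m}{\tau}$ case you justify the cascading call $\subNorm{\tau''}{\eraseForSub{\Psi''}}{A'_k}$ by saying the secondary measure shrinks. Under a lexicographic order, a decrease in the secondary component only licenses an appeal to the induction hypothesis when the primary component does not increase; but the typing derivation of $A'_k$ under $\Psi''$ is not a subderivation of the original derivation of $\tyForceWith{P_m}{\tau}$ \--- it is obtained by inversion on the \emph{output} of the induction hypothesis, so its size is unrelated to the input derivation. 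With your ordering the induction is therefore not well-founded at this point. The paper's measure puts $\delta$ first and the derivation second: in the cascading case $\eraseForSub{\Psi''}$ is a strict structural sub-component of $\delta$ (precisely the invariant $\kappa_k < \gamma$ noted after the definition of hereditary substitution), so the first component strictly decreases and the new derivation may be arbitrary, while in all ordinary congruence cases $\delta$ is unchanged and the derivation shrinks. Swapping the two components repairs your argument; the rest of the proposal matches the paper's proof.
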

\begin{proof}
  By lexicographical induction on \(\delta\) and the well-formedness (or typing) derivation.
  When the typing derivation splits \(\Gamma_1\), apply \Cref{lem:splitting} and \Cref{lem:effect-of-splitting}
  as needed.
\end{proof}

\subsection{Small-step Semantics and Type Safety}
Our small-step semantics fundamentally depends on the modes of terms.
Thus, in this section, we use \(\tmCtxSuspElab{k}{l}{\hat\Psi}{e}\)
annotated with its modes instead of \(\tmCtxSusp{\hat\Psi}{e}\)
to clarify the mode in which the term is evaluated.
These modes are obtained by elaborating terms while checking
their types. Likewise, we elaborate a substitution
%attach a type/kind to each entry of
%a substitution 
during the type checking and simply write 
\(\sub{\sigma}{e}\) instead of \(\subNorm{\sigma}{\delta}{e}\).

How do modes affect evaluation? To answer this, we first
need to identify the redices in \elevator{}.
First, the $\beta$-redex (\(\tmApp{(\tmLam{x}{A_k}{e})}{e'}\))
lets us invoke functions.
\(\loadStoreRed\)-redex (\(\tmLoadElab{m}{n}{x}{\tmStoreElab{m}{n}{e}}{e'}\))
replaces the pointer $x$ with the actual value $e$ in the term $e'$.
The most important one is \(\forceSuspRed\)-redex
(\(\tmForceWithElab{m}{k}{\tmCtxSuspElab{m}{k}{\hat\Gamma}{e}}{\sigma}\)).
This allows us to either splice a template into another template or
execute a template depending on the mode in which it is evaluated.
As an example of this dependence on a mode, recall the mode structure of
\Cref{sec:motivations} where we distinguish between code (mode $\mC$)
and programs (mode $\mP$) where $\mC \geq \mP$.
To arrive at a suitable metaprogramming foundation, we want to eagerly
compose program templates while keeping any other redices
irrelevant to the template composition. Note that, if a code generator in
\(\mC\) requires some evaluation such as eliminating $\beta$-redices
(at \(\mC\)), we permit such evaluation to generate a template.
However, if the template itself contains
$\beta$-redices at \(\mP\), we avoid reducing them, since
program templates describe suspended programs. These $\beta$-redices
are eventually reduced at \(\mP\) only when we force the execution
of the suspended program template.
%eit outside
%of \(\tmCtxSuspElab{\mC}{\mP}{\hat\Psi}{-}\). 
Therefore, 
%when we
%have a term with a redex at \(\mP\) (for example,
%\(\tmApp{(\tmLam{x}{\tyOne{\mP}}{x})}{\tmOne}\)), 
the decision to reduce a $\beta$ redex depends on whether it occurs inside a
suspended program or not. 
%\(\tmCtxSuspElab{\mC}{\mP}{\hat\Psi}{-}\). 
In general, when we
have a term with a redex at mode \(l\), whether we reduce the redex
depends on if it occurs within \(\tmCtxSuspElab{k}{n}{\hat\Psi}{-}\)
for some mode \(k\) satisfying \(l \not\geq k\) (which we will call
the \emph{ambient mode of a template}).
When the redex is part of such a template, we treat it just as
a syntactic structure, aligning with the design principle for operational
semantics outlined in the introduction:
\begin{quotation}
  The evaluation of a term should not know anything about
  a term in an inaccessible memory other than its syntactic structure.
\end{quotation}

To capture this mode-dependent behaviour, we distinguish between
(1) term evaluation and (2) template reduction. Template reduction
allows reductions for suspended programs, i.e. under
\(\tmCtxSuspElab{k}{n}{\hat\Psi}{-}\) for some ambient mode \(k\).
% In this template
% reduction, we look for $\mathsf{force}{-}\mathsf{susp}$ redex in a
% template and reduces it.
This distinction leads to a more fine-grained notion of values in \elevator.
In particular, we also distinguish between weak normal/neutral form
and
normal template in \(k\) (see \Cref{fig:result-syntax}).
The weak normal/neutral form is the result of term evaluation,
and the normal template in mode \(k\) is the result of the template
reduction under \(\tmCtxSuspElab{k}{n}{\hat\Psi}{-}\).

The weak normal/neutral form is weak in the sense that it does not evaluate
a term under type/term lambda abstractions nor the body of a
\(\tmLoadName\) term. One can check this in \Cref{fig:result-syntax};
\(\tmTLam{\alpha}{K_k}{e}\), \(\tmLam{x}{A_k}{e}\), and
\(\tmLoadElab{m}{k}{x}{q}{e}\) all contain a possibly-reducible term \(e\).
The more interesting case is the \(\tmCtxSuspElab{k}{l}{\hat\Psi}{a^k}\) case.
In this case, the weak normal form contains the normal template \(a^k\)
instead of a weak normal/neutral form. This is because handling evaluation
under \(\tmCtxSuspElab{k}{l}{\hat\Psi}{-}\) is the reason why we introduce
the template reduction, resulting in a normal template.

For the normal template in mode \(k\), we define two cases for each syntactic form
that can raise the mode of the subterm to an accessible mode from \(k\).
For instance, under \(l \not\ge k\), \(e\) in \(\tmStoreElab{m}{l}{e}\)
lives at mode \(m\), and since \(m \ge l\), we do not know whether
\(m \ge k\) or not. If \(m \ge k\), then \(m\) is now accessible from
the ambient mode \(k\), thus we evaluate \(e\) into its normal form.
Conversely, if \(m \not\ge k\), then \(m\) is still not accessible
from the ambient mode \(k\), leading us to reduce \(e\) as a template.
Therefore, \(\tmStoreElab{m}{l}{e}\) in the first case reduces to
\(\tmStoreElab{m}{l}{v}\) (where \(v\) is a weak normal form)
for the syntax of the normal template while the second case gives
\(\tmStoreElab{m}{l}{a^k}\) (where \(a^k\) is a normal template).

% This describes a template with
% no redices at a mode accessible from \(k\); in particular, 
% it does not contain any $\mathsf{force}{-}\mathsf{susp}$ redices at a
% mode accessible from \(k\). This form is normal in the sense
% that it describes the syntax of a template and captures faithfully
% previous notions of code in \cite{Davies:ACM01}.

Why do we define the syntax of weak normal forms, which can be open,
instead of values? This is because a normal template may
introduce a variable that can be used in a later weak normal form.
For example, consider the following normal template in ambient mode \(k\):
\[
\tmLoadElab{m}{l}{x}{(\tmApp{(\tmLam{x}{\tyOne{l}}{\tmStoreElab{m}{l}{\tmLam{y}{\tyOne{m}}{y}}})}{\tmOne})}{\tmStoreElab{m}{l}{\tmApp{x}{\tmOne}}}
\]
where \(m > k > l\). Since \(k > l\), we do not evaluate \(\beta\)-redex
\((\tmApp{(\tmLam{x}{\tyOne{l}}{\tmStoreElab{m}{l}{\tmLam{y}{\tyOne{m}}{y}}})}{\tmOne})\)
at \(l\). However, as \(m > k\), we evaluate the inner terms at mode \(m\),
but \(\tmApp{x}{\tmOne}\) is open. Because of this, we need to
define the open evaluation result.
% BP: So what is the result of template reduction of this template
% above? -- Isn't it simply one?
% \bpcomment{So what is the result of template reduction of this template
%  above? -- Isn't it simply one because x will be bound to the identity
%  function?}

Nonetheless, as we will prove in this section, \elevator{} preserves typing,
so evaluation of a closed term gives another closed term. Therefore, if the evaluation
reaches a weak normal form, it must be closed.
Consequently, we define \emph{values} as closed weak normal forms.

\begin{figure}[ht]
  \[
    % \scriptsize
    \begin{array}{llcll}
      \text{Weak Normal Forms} & v,u,\andothers & \syntaxDef & \tmCtxSuspElab{k}{l}{\hat\Psi}{a^k} \syntaxAlt \tmStoreElab{k}{l}{v}\\
                               & & \syntaxAlt & \tmTLam{\alpha}{K_k}{e} \syntaxAlt \tmLam{x}{A_k}{e} \syntaxAlt \tmOne \syntaxAlt q\\
      \text{Weak Neutral Forms} & q,r,\andothers & \syntaxDef & x \syntaxAlt \tmForceWithElab{m}{k}{q}{\sigma} \syntaxAlt \tmLoadElab{m}{k}{x}{q}{e}\\
                               & & \syntaxAlt & \tmTApp{q}{A_k} \syntaxAlt \tmApp{q}{v}\\[0.5em]
      \text{Normal Templates} & a^k,b^k,\andothers & \syntaxDef & x \syntaxAlt \tmCtxSuspElab{l_0}{l_1}{\hat\Psi}{a^k}\\
      \text{in mode \(k\)} & & \syntaxAlt & \tmForceWithElab{l_0}{l_1}{a^k}{\sigma} & (\text{If } l_0 \not\ge k)\\
                               & & \syntaxAlt & \tmForceWithElab{m}{l_1}{q}{\sigma} & (\text{If } m \ge k)\\
                               & & \syntaxAlt & \tmStoreElab{l_0}{l_1}{a^k} & (\text{If } l_0 \not\ge k)\\
                               & & \syntaxAlt & \tmStoreElab{m}{l_1}{v} & (\text{If } m \ge k)\\
                               & & \syntaxAlt & \tmLoadElab{l_0}{l_1}{x}{a^k}{b^k} & (\text{If } l_0 \not\ge k)\\
                               & & \syntaxAlt & \tmLoadElab{m}{l_1}{x}{v}{b^k} & (\text{If } m \ge k)\\
                               & & \syntaxAlt & \tmTLam{\alpha}{K_l}{a^k} \syntaxAlt \tmTApp{a^k}{A_l}\\
                               & & \syntaxAlt & \tmLam{x}{A_l}{a^k} \syntaxAlt \tmApp{a^k}{b^k} \syntaxAlt \tmOne\\
                               % & & \syntaxAlt & \ldots\\%[1em]

                               % \text{Weak Normal Substitutions} & \theta,\andothers & \syntaxDef & \subEmpty \syntaxAlt \subCons{\theta}{\alpha{\mapsto}A_k} \syntaxAlt \subCons{\theta}{x{\mapsto_k}v}\\[0.5em]
                               % \text{Normal Template} & \omega^k,\andothers & \syntaxDef & \subEmpty \syntaxAlt \subCons{\omega^k}{\alpha{\mapsto}A_{k'}}\\
                               % \text{Substitutions in mode \(k\)} & & \syntaxAlt & \subCons{\omega^k}{x{\mapsto_l}a^k} & (\text{If } l \not\ge k)\\
                               % & & \syntaxAlt & \subCons{\omega^k}{x{\mapsto_m}v} & (\text{If } m \ge k)\\
    \end{array}
  \]
  \caption{Syntax of weak normal/neutral forms and normal templates in mode \(k\)}
  \label{fig:result-syntax}
\end{figure}

Using these definitions, we can now give
the small-step semantics for \elevator. The small-step semantics given
in \Cref{fig:small-step} contains two judgements, \(e \stepto e'\)
(evaluation) and
\(e \refinestepto{m} e'\) (template reduction): the first one describes the stepping of
a term $e$ to a term $e'$; the second one is used to
reduce a template under an ambient mode $m$. During this
reduction, we only reduce code splices,
i.e. a $\mathsf{force}{-}\mathsf{susp}$ redex that occurs in $e$.

\begin{figure}[ht]
  \[
    \small
    \begin{array}{c}
      \multicolumn{1}{l}{\fbox{\(e \stepto e'\) term \(e\) steps to \(e'\)}}\\[1em]
      %%%%%%%%%%%%%%%%%%%%%%%%%%%%%%%%%%%%%%%%%%%%%%%%%%%%%%%%%%%% 
      % The following rules are obvious lambda calculus rules
      %%%%%%%%%%%%%%%%%%%%%%%%%%%%%%%%%%%%%%%%%%%%%%%%%%%%%%%%%%%% 
      \multicolumn{1}{l}{\mbox{Reductions}}\\[0.5em]
      \infer
      {\tmForceWithElab{m}{k}{\tmCtxSuspElab{m}{k}{\hat\Psi}{a^m}}{\sigma} \stepto \sub{\sigma}{a^m}}
      {}
\quad
      \infer
      {\tmLoadElab{m}{n}{x}{\tmStoreElab{m}{n}{v}}{e_2} \stepto \sub{x{\mapsto_m}v}{e_2}}
      {}
\\[1em]
      \infer
      {\tmTApp{(\tmTLam{\alpha}{K_k}{e})}{A_k} \stepto \sub{\alpha{\mapsto}A_k}{e}}
      {}
\quad
      \infer
      {\tmApp{(\tmLam{x}{A_k}{e})}{u} \stepto \sub{x{\mapsto_k}u}{e}}
      {}
\\[1em]
\multicolumn{1}{l}{\mbox{Congruences}}\\%[1em]
      \infer
      {\tmCtxSuspElab{k}{l}{\hat\Psi}{e} \stepto \tmCtxSuspElab{k}{l}{\hat\Psi}{e'}}
      {e \refinestepto{k} e'}
      \qquad
      \infer
      {\tmForceWithElab{m}{k}{e}{\sigma} \stepto \tmForceWithElab{m}{k}{e'}{\sigma}}
      {e \stepto e'}
      \qquad

      \\[1em]
      \infer
      {\tmStoreElab{m}{k}{e} \stepto \tmStoreElab{m}{k}{e'}}
      {e \stepto e'}
      \qquad
      \infer
      {\tmLoadElab{m}{n}{x}{e_1}{e_2} \stepto \tmLoadElab{m}{n}{x}{e_1'}{e_2}}
      {e_1 \stepto e_1'}
      \\[1em]
      \infer
      {\tmTApp{e}{A_k} \stepto \tmApp{e'}{A_k}}
      {e \stepto e'}
      \qquad
      \infer
      {\tmApp{e}{f} \stepto \tmApp{e'}{f}}
      {e \stepto e'}
      \qquad
      \infer
      {\tmApp{v}{f} \stepto \tmApp{v}{f'}}
      {f \stepto f'}
      \\[2em]
      \multicolumn{1}{l}{\fbox{\(e \refinestepto{m} e'\) term \(e\) is one-step refined into \(e'\) in the evaluation of mode \(m\)}}\\[1.5em]
      %%%%%%%%%%%%%%%%%%%%%%%%%%%%%%%%%%%%%%%%%%%%%%%%%%%%%%%%%%%% 
      % The following rules are obvious congruence rules
      %%%%%%%%%%%%%%%%%%%%%%%%%%%%%%%%%%%%%%%%%%%%%%%%%%%%%%%%%%%% 
      \infer
      {\tmCtxSuspElab{k}{l}{\hat\Gamma}{e} \refinestepto{m} \tmCtxSuspElab{k}{l}{\hat\Gamma}{e'}}
      {e \refinestepto{m} e'}
      \qquad
      \infer
      {\tmForceWithElab{n}{k}{e}{\sigma} \refinestepto{m} \tmForceWithElab{n}{k}{e}{\sigma}}
      {n \not\ge m
      \qquad e \refinestepto{m} e'}
      \qquad
      \infer
      {\tmForceWithElab{n}{k}{e}{\sigma} \refinestepto{m} \tmForceWithElab{n}{k}{e}{\sigma}}
      {n \ge m
      \qquad e \stepto e'}
      \\[1em]
      \infer
      {\tmForceWithElab{n}{k}{\tmCtxSuspElab{n}{k}{\hat\Psi}{a^n}}{\sigma} \refinestepto{m} \sub{\sigma}{a^n}}
      {n \ge m}
      %%%%%%%%%%%%%%%%%%%%%%%%%%%%%%%%%%%%%%%%%%%%%%%%%%%%%%%%%%%% 
      % The rules for \tmStoreElab are almost the same to ones for \tmForceWithElab,
      % and the rules for \tmLoadElab are just the binary versions of those.
      %%%%%%%%%%%%%%%%%%%%%%%%%%%%%%%%%%%%%%%%%%%%%%%%%%%%%%%%%%%% 
      \\[1em]
      \infer
      {\tmStoreElab{n}{k}{e} \refinestepto{m} \tmStoreElab{n}{k}{e}}
      {n \not\ge m
      \qquad e \refinestepto{m} e'}
      \qquad
      \infer
      {\tmStoreElab{n}{k}{e} \refinestepto{m} \tmStoreElab{n}{k}{e}}
      {n \ge m
      \qquad e \stepto e'}
      \\[1em]
      \infer
      {\tmLoadElab{n}{l}{x}{e}{f} \refinestepto{m} \tmLoadElab{n}{l}{x}{e'}{f}}
      {l \not\ge m
      \qquad e \refinestepto{m} e'}
      \qquad
      \infer
      {\tmLoadElab{n}{l}{x}{e}{f} \refinestepto{m} \tmLoadElab{n}{l}{x}{e'}{f}}
      {l \ge m
      \qquad e \stepto e'}
      \\[1em]
      \infer
      {\tmLoadElab{n}{l}{x}{v}{f} \refinestepto{m} \tmLoadElab{n}{l}{x}{v}{f'}}
      {l \not\ge m
      \qquad f \refinestepto{m} f'}
      \qquad
      \infer
      {\tmLoadElab{n}{l}{x}{a^m}{f} \refinestepto{m} \tmLoadElab{n}{l}{x}{a^m}{f'}}
      {l \ge m
      \qquad f \refinestepto{m} f'}
      \\[1em]
      \infer
      {\tmTLam{\alpha}{K_k}{e} \refinestepto{m} \tmTLam{\alpha}{K_k}{e'}}
      {e \refinestepto{m} e'}
      \qquad
      \infer
      {\tmTApp{e}{A_k} \refinestepto{m} \tmTApp{e'}{A_k}}
      {e \refinestepto{m} e'}
      \\[1em]
      \infer
      {\tmLam{x}{A_k}{e} \refinestepto{m} \tmLam{x}{A_k}{e'}}
      {e \refinestepto{m} e'}
      \qquad
      \infer
      {\tmApp{e}{f} \refinestepto{m} \tmApp{e'}{f}}
      {e \refinestepto{m} e'}
      \qquad
      \infer
      {\tmApp{a^m}{f} \refinestepto{m} \tmApp{a^m}{f'}}
      {f \refinestepto{m} f'}
    \end{array}
  \]
  \caption{Small-step semantics of \elevator{}}
  \label{fig:small-step}
\end{figure}

The only place where term evaluation depends on template reduction is
\[
  \infer
  {\tmCtxSuspElab{k}{l}{\hat\Psi}{e} \stepto \tmCtxSuspElab{k}{l}{\hat\Psi}{e'}}
  {e \refinestepto{k} e'}
\]
Here, we go into the body of the template to determine if any
template reduction is possible. During the process, it is crucial to keep
the ambient mode \(k\), as illustrated at the beginning of this section,
to decide whether to evaluate a subterm.

The other transition, from template reduction to term evaluation, happens in the following rule, for example:
\[
  \infer
  {\tmForceWithElab{n}{k}{e}{\sigma} \refinestepto{m} \tmForceWithElab{n}{k}{e}{\sigma}}
  {n \ge m
    \qquad e \stepto e'}
\]
In this rule, \(e\) is a term that generates a template. If this generator lives
in an accessible mode \(n\) from the ambient mode \(m\) (i.e. \(n \geq m\)),
we evaluate the generator to eventually produce a resulting template.
Then, we apply
\[
  \infer
  {\tmForceWithElab{n}{k}{\tmCtxSuspElab{n}{k}{\hat\Psi}{a^n}}{\sigma} \refinestepto{m} \sub{\sigma}{a^n}}
  {n \ge m}
\]
to splice the resulting template into a surrounding template.

On the other hand, if the inner term lives in a mode \(n\) where \(n \not\geq m\),
then the ambient mode \(m\) cannot access \(n\). Therefore, we treat the inner term
as a syntactic structure (i.e.\ a part of the surrounding template) for the template
reduction instead of a target of the term evaluation. Thus, under this condition,
the template reduction rule for \(\tmForceWithElab{n}{k}{e}{\sigma}\) becomes
a congruence rule:
\[
  \infer
  {\tmForceWithElab{n}{k}{e}{\sigma} \refinestepto{m} \tmForceWithElab{n}{k}{e}{\sigma}}
  {n \not\ge m
    \qquad e \refinestepto{m} e'}
\]

This branching based on mode ordering occurs multiple times in the template reduction
rules, and it enables syntactic comparison of two templates (and terms) based on
accessible memories. The ability to syntactically compare terms according to memory
accessibility is crucial for capturing the previous notion of code
(e.g.\ as in~\cite{Davies:ACM01}) and is essential for proving \Cref{thm:mode-safety},
which ensures that memory accessibility is respected by the operational semantics.

Now we show that this operational semantics is type-safe. First, it preserves typing.
\begin{theorem}[Preservation]\label{thm:preservation}\mbox{}
  \begin{enumerate}
  \item If \(\Gamma \vdash e : A_k\) and \(e \stepto e'\) then \(\Gamma \vdash e' : A_k\)
  \item If \(\Gamma \vdash e : A_k\) and \(e \refinestepto{m} e'\) then \(\Gamma \vdash e' : A_k\)
  \end{enumerate}
\end{theorem}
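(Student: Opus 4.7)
The plan is to prove both parts simultaneously by mutual induction on the derivations of $e \stepto e'$ and $e \refinestepto{m} e'$. The mutual structure is forced by two interlocking dependencies: $\stepto$ invokes $\refinestepto{k}$ in the congruence rule for suspensions ($\tmCtxSuspElab{k}{l}{\hat\Psi}{e} \stepto \tmCtxSuspElab{k}{l}{\hat\Psi}{e'}$ when $e \refinestepto{k} e'$), and most rules of $\refinestepto{m}$ either recurse on $\refinestepto{m}$ or cross back to $\stepto$ when the accessibility condition $n \ge m$ holds.

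For the four base reductions of $\stepto$, each case follows the same recipe: (i) invert the typing derivation to extract the types of the subterms, (ii) assemble a well-typed substitution, and (iii) apply the substitution lemma (\Cref{lem:substitution}, in particular \Cref{itm:substitution-term}). For $\beta$-reduction $\tmApp{(\tmLam{x}{A_k}{e})}{u} \stepto \sub{x{\mapsto_k}u}{e}$, inversion on the application rule gives $\Gamma_1 \vdash \tmLam{x}{A_k}{e} : \tyArr{A_k}{B_k}$ and $\Gamma_2 \vdash u : A_k$, and inversion on the lambda rule gives $\ctxCons{\Gamma_1}{x{:}A_k} \vdash e : B_k$; the singleton substitution $(x{\mapsto_k}u)$ typechecks against $(x{:}A_k)$ using $\Gamma_2 \vdash u : A_k$, and the substitution lemma concludes. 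The type-$\beta$ and load-store reductions are structurally analogous. The force-susp reduction needs one extra inversion: we peel off the force rule and then the suspension rule to obtain $\ctxCons{\Gamma_1}{\Psi} \vdash a^m : A_k$ together with $\Gamma_2 \vdash \sigma : \Psi$, and the substitution lemma yields $\ctxMerge{\Gamma_1}{\Gamma_2} \vdash \sub{\sigma}{a^m} : \subNorm{\sigma}{\eraseForSub{\Psi}}{A_k}$, matching the type the original force-susp expression carried.

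All congruence rules of both judgements are handled uniformly: invert the typing rule to expose a typing judgement for the subterm that steps, invoke the appropriate inductive hypothesis (mutual when moving between $\stepto$ and $\refinestepto{}$), and reassemble the enclosing typing derivation using the preserved type of the subterm. The paired cases in each force/store/load rule of $\refinestepto{m}$ — one using $\refinestepto{m}$ under $n \not\ge m$ and one using $\stepto$ under $n \ge m$ — are treated symmetrically; the mode condition only routes the proof to the correct inductive hypothesis and plays no role in the typing argument itself. The suspension congruence for $\stepto$ is the one place where we must cross over to the IH for template reduction at the refreshed ambient mode $k$.

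The main obstacle is bookkeeping around context splits. The typing rules present contexts in $\ctxMerge{\Gamma_1}{\Gamma_2}$ form, whereas \Cref{lem:substitution} is stated with concatenation and a single substitution $\sigma$ covering one segment of the context. Whenever a redex's typing splits the ambient context and the substitution lives under only one half, we invoke \Cref{lem:splitting} to decompose the substitution consistently with that split and \Cref{lem:effect-of-splitting} to ensure that the split substitution has the same effect on the already-well-typed subterm as the full substitution. Once this reconciliation is in place, every case reduces to a direct application of the substitution lemma or the inductive hypothesis, so no genuinely novel mathematical difficulty remains.
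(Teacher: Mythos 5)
Your proposal is correct and follows essentially the same route as the paper, which proves preservation by induction on the small-step derivations (necessarily mutual, given the interleaving of \(\stepto\) and \(\refinestepto{m}\)) and applies the Substitution Lemma in the redex cases. The additional bookkeeping you describe via \Cref{lem:splitting} and \Cref{lem:effect-of-splitting} is consistent with how the paper itself reconciles context splits with substitution.
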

\begin{proof}
  By induction on small-step derivations, applying~\ref{lem:substitution} (Substitution Lemma) as needed.
\end{proof}
This operational semantics satisfies the progress property as well:
\begin{theorem}[Progress]\label{thm:progress}\mbox{}
  \begin{enumerate}
  \item If \(\Gamma \vdash e : A_k\) then either \(e \stepto e'\) or \(e\) is a weak normal form.
  \item If \(\Gamma \vdash e : A_k\) then either \(e \refinestepto{m} e'\) or \(e\) is a normal template in mode \(m\).
  \end{enumerate}
\end{theorem}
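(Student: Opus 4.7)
I would prove the two parts simultaneously by induction on the typing derivation of $e$. The two parts interleave in a controlled way: part~(1) invokes part~(2) when it encounters a suspension $\tmCtxSuspElab{k}{l}{\hat\Psi}{e}$ (at ambient mode $k$), while part~(2) at ambient mode $m$ invokes part~(1) on a subterm inhabiting a mode $n$ exactly when $n \ge m$. Introduction-like forms yield directly one of the shapes listed in Figure~\ref{fig:result-syntax}; elimination forms are handled by inspecting what the inductive hypothesis returns on the principal subterm.

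For part~(1), the cases $\tmOne$, $\tmLam{x}{A_k}{e}$, and $\tmTLam{\alpha}{K_k}{e}$ are immediate weak normal forms, and $x$ is a weak neutral form. For $\tmStoreElab{m}{k}{e}$ I apply IH~(1) on $e$; if it steps, use the congruence, otherwise the weak normal form wraps directly. For $\tmCtxSuspElab{k}{l}{\hat\Psi}{e}$ I apply IH~(2) at ambient mode $k$. For the eliminators $\tmForceWithElab{m}{k}{e}{\sigma}$, $\tmLoadElab{m}{n}{x}{e_1}{e_2}$, $\tmTApp{e}{A_m}$, and $\tmApp{e}{f}$, I apply IH~(1) on the principal subterm; either it steps (congruence), or it is a weak normal form, in which case a classification-by-type argument (using Theorem~\ref{thm:preservation}) shows it is either the relevant introduction form (firing $\forceSuspRed$, $\loadStoreRed$, or $\beta$) or a weak neutral $q$, making the whole elimination weakly neutral.

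For part~(2), each syntactic form that raises the mode of a subterm to some $n$ triggers a case split on whether $n \ge m$. When $n \ge m$, I invoke IH~(1): a step matches the $n \ge m$ template-reduction rule, and a weak normal form $v$ fits exactly the normal-template clause requiring accessibility. When $n \not\ge m$, I invoke IH~(2) at ambient mode $m$: a template reduction matches the corresponding congruence rule, and a normal template $a^m$ fits the $l_0 \not\ge m$ clause. The $\forceSuspRed$-style template reduction fires precisely when IH~(1), applied inside $\tmForceWithElab{n}{k}{-}{\sigma}$ with $n \ge m$, returns a suspension.

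\textbf{The hard part} will be the classification of weak normal forms used in the eliminator cases of part~(1), together with the bookkeeping that ensures the template-reduction rules and the normal-template grammar of Figure~\ref{fig:result-syntax} partition the state space exactly. For every ambient mode $m$, each normal-template grammar clause must correspond either to a template-reduction congruence or to a configuration in which no reduction rule applies, and, symmetrically, every template-reduction rule must match a term whose principal subterm is not already a normal template. Once this mode-based partition is stated as an auxiliary invariant, each individual induction step becomes mechanical.
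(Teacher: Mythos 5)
Your proposal is correct and takes essentially the same route as the paper, whose proof is simply ``by induction on the typing derivation'': you carry out that induction mutually for the two parts, which is the natural reading. The additional detail you supply \--- the interleaving of the two parts at suspensions, the canonical-forms classification in the eliminator cases (for which inversion on the subterm's typing already suffices; preservation is not actually needed there), and the mode-accessibility case splits mirroring the normal-template grammar \--- is exactly the bookkeeping the paper leaves implicit.
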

\begin{proof}
  By induction on typing derivation.
\end{proof}

\subsection{Equational Theory Over a Mode and Mode Safety}
The substructurality of assumptions and type safety does not suffice to ensure
that the preorder, i.e.\ memory accessibility, is respected during evaluation.
In other words, we want to prove a theorem stating that during the evaluation of
a term in a mode \(m\), values in any inaccessible memory region \(k\) are irrelevant.
This statement generalizes the non-interference theorem
found in information flow type systems~\cite{Crary:JFP05,Garg:CSFW06},
which asserts that ``private'' data should never affect the execution of the
``public'' part of a program. Our theorem also generalizes the erasure
theorem~\cite{Jang:FSCD24}, which establishes a similar concept but within
a specific class of types.
 
To state this theorem, we require an equivalence relation over
a mode that relates two terms in that mode if and only if
they are syntactically identical except for some differences in
terms living in inaccessible modes from a chosen mode \(n\), which
corresponds to the ``public'' mode for the information flow systems.
As in the information flow systems, we disregard equivalence between
``private'' terms, i.e. terms in inaccessible modes from \(n\).
Equivalence relations over \(n\) between
two well-formed contexts/kinds/types are defined in
\Cref{fig:equiv-contexts,fig:equiv-kinds,fig:equiv-types}.
Equivalence relations over \(n\) between two well-typed terms
and substitutions are defined in
\Cref{fig:equiv-terms,fig:equiv-substitutions}, respectively.

\begin{figure}[ht]
  \[
    \small
    \begin{array}{c}
      \multicolumn{1}{l}{\fbox{\(\vdash \Gamma \simeq_n \Gamma'\)\qquad\parbox{35em}{when well-formed contexts \(\Gamma\) and \(\Gamma'\) have the same syntax except for inaccessible memory regions from \(n\)}}}\\[3em]
      \infer
      {\vdash \ctxEmpty \simeq_n \ctxEmpty}
      {}
      \\[1em]
      \infer
      {\vdash \ctxCons{\ctxMerge{\Gamma_1}{\Gamma_2}}{\alpha{:}K_k} \simeq_n \ctxCons{\ctxMerge{\Gamma'_1}{\Gamma'_2}}{\alpha{:}K'_k}}
      {\Gamma_1 \geq k \geq n
      \qquad \Gamma'_1 \geq k
      \qquad \vdash \Gamma_1 \simeq_n \Gamma'_1
      \qquad \vdash \ctxMerge{\Gamma_1}{\Gamma_2} \simeq_n \ctxMerge{\Gamma'_1}{\Gamma'_2}
      \qquad \Gamma_1 \vdash K_k \simeq_n K'_k \dashv \Gamma'_1}
      \\[1em]
      \infer
      {\vdash \ctxCons{\ctxMerge{\Gamma_1}{\Gamma_2}}{\alpha{:}K_k} \simeq_n \ctxCons{\ctxMerge{\Gamma'_1}{\Gamma'_2}}{\alpha{:}K'_k}}
      {\Gamma_1 \geq k \not\geq n
      \qquad \Gamma'_1 \geq k
      \qquad \vdash \Gamma_1 \simeq_n \Gamma'_1
      \qquad \vdash \ctxMerge{\Gamma_1}{\Gamma_2} \simeq_n \ctxMerge{\Gamma'_1}{\Gamma'_2}
      \qquad \Gamma_1 \vdash K_k \kindCls
      \qquad \Gamma'_1 \vdash K'_k \kindCls}
      \\[1em]
      \infer
      {\vdash \ctxCons{\ctxMerge{\Gamma_1}{\Gamma_2}}{x{:}A_k} \simeq_n \ctxCons{\ctxMerge{\Gamma'_1}{\Gamma'_2}}{x{:}A'_k}}
      {\Gamma_1 \geq k \geq n
      \qquad \Gamma'_1 \geq k
      \qquad \vdash \Gamma_1 \simeq_n \Gamma'_1
      \qquad \vdash \ctxMerge{\Gamma_1}{\Gamma_2} \simeq_n \ctxMerge{\Gamma'_1}{\Gamma'_2}
      \qquad \Gamma_1 \vdash A_k : \kiType{k} \simeq_n A'_k : \kiType{k} \dashv \Gamma'_1}
      \\[1em]
      \infer
      {\vdash \ctxCons{\ctxMerge{\Gamma_1}{\Gamma_2}}{x{:}A_k} \simeq_n \ctxCons{\ctxMerge{\Gamma'_1}{\Gamma'_2}}{x{:}A'_k}}
      {\Gamma_1 \geq k \not\geq n
      \qquad \Gamma'_1 \geq k
      \qquad \vdash \Gamma_1 \simeq_n \Gamma'_1
      \qquad \vdash \ctxMerge{\Gamma_1}{\Gamma_2} \simeq_n \ctxMerge{\Gamma'_1}{\Gamma'_2}
      \qquad \Gamma_1 \vdash A_k : \kiType{k}
      \qquad \Gamma'_1 \vdash A'_k : \kiType{k}}
    \end{array}
  \]
  \caption{Equivalence over mode \(n\) on contexts}
  \label{fig:equiv-contexts}
\end{figure}

\begin{figure}[ht]
  \[
    \small
    \begin{array}{c}
      \multicolumn{1}{l}{\fbox{\(\Gamma \vdash K_k \simeq_n K'_k \dashv \Gamma'\)\qquad\parbox{32em}{when well-formed kinds \(K_k\) (in \(\Gamma\)) and \(K'_k\) (in \(\Gamma'\)) have the same syntax except for inaccessible modes from \(n\)}}}\\[1.5em]
      \infer
      {\Gamma \vdash \kiType{k} \simeq_n \kiType{k} \dashv \Gamma'}
      {k \geq n}
      \\[1em]
      \infer
      {\Gamma \vdash \kiCtxUp{k}{l}{\Psi}{K_l} \simeq_n \kiCtxUp{k}{l}{\Psi'}{K'_l} \dashv \Gamma'}
      {k > \Psi \geq l \geq n
      \qquad \vdash \ctxCons{\Gamma}{\Psi} \simeq_n \ctxCons{\Gamma'}{\Psi'}
      \qquad \ctxCons{\Gamma}{\Psi} \vdash K_l \simeq_n K'_l \dashv \ctxCons{\Gamma'}{\Psi'}}
      \\[1em]
      \infer
      {\kiCtxUp{k}{l}{\Psi}{K_l} \simeq_n \kiCtxUp{k}{l}{\Psi'}{K'_l}}
      {k > \Psi \geq l \not\geq n
      \qquad \vdash \ctxCons{\Gamma}{\Psi} \simeq_n \ctxCons{\Gamma'}{\Psi'}
      \qquad \ctxCons{\Gamma}{\Psi} \vdash K_l \kindCls
      \qquad \ctxCons{\Gamma'}{\Psi'} \vdash K'_l \kindCls}
    \end{array}
  \]
  \caption{Equivalence over mode \(n\) on kinds}
  \label{fig:equiv-kinds}
\end{figure}

\begin{figure}[ht]
  \[
    \small
    \begin{array}{c}
      \multicolumn{1}{l}{\fbox{\(\Gamma \vdash A_k : K_k \simeq_n A'_k : K'_k \dashv \Gamma'\)\qquad\parbox{30em}{when well-formed types \(A_k\) (of \(K_k\) in \(\Gamma\)) and \(A'_k\) (of \(K'_k\) in \(\Gamma'\)) have the same syntax except for inaccessible modes from \(n\)}}}\\[1.5em]
      \infer
      {\Gamma \vdash \tmCtxSusp{\hat\Psi}{A_l} : \kiCtxUp{k}{l}{\Psi}{K_l} \simeq_n \tmCtxSusp{\hat\Psi}{A'_l} : \kiCtxUp{k}{l}{\Psi'}{K'_l}  \dashv \Gamma'}
      {l \geq n
      \qquad \ctxCons{\Gamma}{\Psi} \vdash A_l : K_l \simeq_n A'_l : K'_l \dashv \ctxCons{\Gamma'}{\Psi'}}
      \\[1em]
      \infer
      {\Gamma \vdash \tmCtxSusp{\hat\Psi}{A_l} : \kiCtxUp{k}{l}{\Psi}{K_l} \simeq_n \tmCtxSusp{\hat\Psi}{A'_l} : \kiCtxUp{k}{l}{\Psi'}{K'_l} \dashv \Gamma'}
      {l \not\geq n
      \qquad \ctxCons{\Gamma}{\Psi} \vdash A_l : K_l
      \qquad \ctxCons{\Gamma'}{\Psi'} \vdash A'_l : K'_l}
      \\[1.5em]
      \infer
      {\ctxMerge{\Gamma_1}{\Gamma_2} \vdash \tyForceWith{A_m}{\sigma} : \subNorm{\sigma}{\eraseForSub{\Psi}}{K_k} \simeq_n \tyForceWith{A'_m}{\sigma'} : \subNorm{\sigma'}{\eraseForSub{\Psi'}}{K'_k} \dashv \ctxMerge{\Gamma'_1}{\Gamma'_2}}
      {\begin{array}{ll}
        \Gamma_1 \geq m \quad \Gamma_1' \geq m \quad \vdash \Gamma_1 \simeq_n \Gamma'_1\\
        \Gamma_1 \vdash \kiCtxUp{m}{k}{\Psi}{K_k} \simeq_n \kiCtxUp{m}{k}{\Psi'}{K'_k} \dashv \Gamma'_1
        &
          \vdash \ctxCons{\ctxMerge{\Gamma_1}{\Gamma_2}}{\Psi} \simeq_n \ctxCons{\ctxMerge{\Gamma'_1}{\Gamma'_2}}{\Psi'}
        \\
        \Gamma_1 \vdash A_m : \kiCtxUp{m}{k}{\Psi}{K_k} \simeq_n A'_m : \kiCtxUp{m}{k}{\Psi'}{K'_k} \dashv \Gamma'_1
        &
          \ctxMerge{\Gamma_1}{\Gamma_2} \vdash \sigma : \Psi \simeq_n \sigma' : \Psi' \dashv \ctxMerge{\Gamma'_1}{\Gamma'_2}
      \end{array}}
      \\[2em]
      \infer
      {\Gamma \vdash \tyCtxUp{k}{l}{\Psi}{A_l} : \kiType{k} \simeq_n \tyCtxUp{k}{l}{\Psi'}{A'_l} : \kiType{k} \dashv \Gamma'}
      {k > \Psi \geq l \geq m
      \qquad k > \Psi'
      \qquad \vdash \ctxCons{\Gamma}{\Psi} \simeq_n \ctxCons{\Gamma'}{\Psi'}
      \qquad \ctxCons{\Gamma}{\Psi} \vdash A_l : \kiType{l} \simeq_n A'_l : \kiType{l} \dashv \ctxCons{\Gamma'}{\Psi'}}
      \\[1em]
      \infer
      {\Gamma \vdash \tyCtxUp{k}{l}{\Psi}{A_l} : \kiType{k} \simeq_n \tyCtxUp{k}{l}{\Psi'}{A'_l} : \kiType{k} \dashv \Gamma'}
      {k > \Psi \geq l \not\geq m
      \qquad k > \Psi'
      \qquad \vdash \ctxCons{\Gamma}{\Psi} \simeq_n \ctxCons{\Gamma'}{\Psi'}
      \qquad \ctxCons{\Gamma}{\Psi} \vdash A_l : \kiType{l}
      \qquad \ctxCons{\Gamma'}{\Psi'} \vdash A'_l : \kiType{l}}
      \\[1em]
      \infer
      {\ctxMerge{\Gamma_1}{\Gamma_2} \vdash \tyDown{m}{k}{A_m} : \kiType{k} \simeq_n \tyDown{m}{k}{A'_m} : \kiType{k} \dashv \ctxMerge{\Gamma'_1}{\Gamma'_2}}
      {\Gamma_1 \geq m \geq k
      \qquad \Gamma'_1 \geq m
      \qquad \vdash \Gamma_1 \simeq_n \Gamma'_1
      \qquad \Gamma_1 \vdash A_m : \kiType{m} \simeq_n A'_m : \kiType{m} \dashv \Gamma'_1}
      \\[1em]
      \infer
      {\ctxMerge{\Gamma_1}{\Gamma_2} \vdash \tyForall{\alpha}{K_m}{A_k} : \kiType{k} \simeq_n \tyForall{\alpha}{K'_m}{A'_k} : \kiType{k} \dashv \ctxMerge{\Gamma'_1}{\Gamma'_2}}
      {\begin{array}{ll}
        \Gamma_1 \ge m \ge k \quad \Gamma'_1 \ge m \quad \vdash \Gamma_1 \simeq_n \Gamma'_1
        \\
        \Gamma_1 \vdash K_m \simeq_n K'_m \dashv \Gamma'_1
        & \ctxCons{(\ctxMerge{\Gamma_1}{\Gamma_2})}{\alpha{:}K_m} \vdash A_k : \kiType{k} \simeq_n A'_k : \kiType{k} \dashv \ctxCons{(\ctxMerge{\Gamma'_1}{\Gamma'_2})}{\alpha{:}K'_m}
      \end{array}}
    \end{array}
  \]
  \caption{Selected rules for equivalence over mode \(n\) on types}
  \label{fig:equiv-types}
\end{figure}

\begin{figure}[ht]
  \[
    \small
    \begin{array}{c}
      \multicolumn{1}{l}{\fbox{\(\Gamma \vdash e : A_k \simeq_n e' : A'_k \dashv \Gamma'\)\qquad\parbox{30em}{when well-typed terms \(e\) (of \(A_k\) in \(\Gamma\)) and \(e'\) (of \(A'_k\) in \(\Gamma'\)) have the same syntax except for inaccessible modes from \(n\)}}}\\[1.5em]
      \infer
      {\Gamma \vdash \tmCtxSusp{\hat\Psi}{e} : \tyUp{k}{l}{\Psi}{A_l} \simeq_n \tmCtxSusp{\hat\Psi'}{e'} : \tyUp{k}{l}{\Psi'}{A'_l} \dashv \Gamma'}
      {l \geq n
      \qquad \ctxCons{\Gamma}{\Psi} \vdash e : A_l \simeq_n e' : A'_l \dashv \ctxCons{\Gamma'}{\Psi'}}
      \qquad
      \infer
      {\Gamma \vdash \tmCtxSusp{\hat\Psi}{e} : \tyUp{k}{l}{\Psi}{A_l} \simeq_n \tmCtxSusp{\hat\Psi'}{e'} : \tyUp{k}{l}{\Psi'}{A'_l} \dashv \Gamma'}
      {l \not\geq n
      \qquad \ctxCons{\Gamma}{\Psi} \vdash e : A_l
      \qquad \ctxCons{\Gamma'}{\Psi'} \vdash e' : A'_l}
      \\[0.75em]
      \infer
      {\ctxMerge{\Gamma_1}{\Gamma_2} \vdash \tmForceWith{e}{\sigma} : \subNorm{\sigma}{\eraseForSub{\Psi}}{A_k} \simeq_n \tmForceWith{e'}{\sigma'} : \subNorm{\sigma'}{\eraseForSub{\Psi}}{A'_k} \dashv \ctxMerge{\Gamma'_1}{\Gamma'_2}}
      {\begin{array}{ll}
        \vdash \Gamma_1 \simeq_n \Gamma'_1 \quad \Gamma_1 \ge m \quad \Gamma'_1 \ge m
        & \vdash \ctxCons{\Gamma_2}{\Psi} \simeq_n \ctxCons{\Gamma'_2}{\Psi'}
        \\
        \Gamma_1 \vdash e : \tyUp{m}{k}{\Psi}{A_k} \simeq_n e' : \tyUp{m}{k}{\Psi'}{A'_k} \dashv \Gamma'_1
        \quad \Gamma_1 \vdash e : \tyUp{m}{k}{\Psi}{A_k} \simeq_n e' : \tyUp{m}{k}{\Psi'}{A'_k} \dashv \Gamma'_1
        & \Gamma_2 \vdash \sigma : \Psi \simeq_n \sigma' : \Psi' \dashv \Gamma'_2
      \end{array}}
      \\[1em]
      \infer
      {\ctxMerge{\Gamma}{\Gamma_W} \vdash \tmStore{e} : \tyDown{m}{k}{A_m} \simeq_n \tmStore{e'} : \tyDown{m}{k}{A'_m} \dashv \ctxMerge{\Gamma'}{\Gamma'_W}}
      {\Gamma \geq m
      \qquad \Gamma' \geq m
      \qquad \vdash \Gamma \simeq_n \Gamma'
      \qquad \Gamma \vdash e : A_m \simeq_n e' : A'_m \dashv \Gamma'}
      \\[0.75em]
      \infer
      {\ctxMerge{\Gamma_1}{\Gamma_2} \vdash \tmLoad{x}{e}{f} : B_k \simeq_n \tmLoad{x}{e'}{f'} : B'_k \dashv \ctxMerge{\Gamma'_1}{\Gamma'_2}}
      {
      \begin{array}{ll}
        \vdash \Gamma_1 \simeq_n \Gamma'_1 \quad \Gamma_1 \ge l \ge k \quad \Gamma'_1 \ge l
        \\
        \Gamma_1 \vdash \tyDown{m}{l}{A_m} : \kiType{m} \simeq_n \tyDown{m}{l}{A'_m} : \kiType{m} \dashv \Gamma'_1
        & \vdash \ctxCons{\Gamma_2}{x{:}A_m} \simeq_n \ctxCons{\Gamma'_2}{x{:}A'_m} \ctxCls
        \\
        \Gamma_1 \vdash e : \tyDown{m}{n}{A_m} \simeq_n e' : \tyDown{m}{n}{A'_m} \dashv \Gamma'_1
        & \ctxCons{\Gamma_2}{x{:}A_m} \vdash f : B_k \simeq_n f' : B'_k \dashv \ctxCons{\Gamma'_2}{x{:}A'_m}
      \end{array}
      }
    \end{array}
  \]
  \caption{Selected rules for equivalence over mode \(n\) on terms}
  \label{fig:equiv-terms}
\end{figure}

\begin{figure}[ht]
  \[
    \small
    \begin{array}{c}
      \multicolumn{1}{l}{\fbox{\(\Gamma \vdash \sigma : \Psi \simeq_n \sigma' : \Psi' \dashv \Gamma'\)\qquad\parbox{30em}{when well-typed substitutions \(\sigma\) (of \(\Psi\) in \(\Gamma\)) and \(\sigma'\) (of \(\Psi'\) in \(\Gamma'\)) have the same syntax except for inaccessible modes from \(n\)}}}\\[1.5em]
      \infer
      {\Gamma_W \vdash \subEmpty : \ctxEmpty \simeq_n \subEmpty : \ctxEmpty \dashv \Gamma'_W}
      {}
      \\[1em]
      \infer
      {\ctxMerge{\Gamma_1}{\Gamma_2} \vdash \subCons{\sigma}{\alpha{\mapsto}A_k} : \ctxCons{\Psi}{\alpha{:}K_k} \simeq_n \subCons{\sigma'}{\alpha{\mapsto}A'_k} : \ctxCons{\Psi'}{\alpha{:}K'_k} \dashv \ctxMerge{\Gamma'_1}{\Gamma'_2}}
      {
      \begin{array}{ll}
        \Gamma_1 \geq k \geq n
        \quad \Gamma'_1 \geq k
        \quad \vdash \Gamma_1 \simeq_n \Gamma'_1
        \\
        \Gamma_1 \vdash \subNorm{\sigma}{\eraseForSub{\Psi}}{K_k} \simeq_n \subNorm{\sigma}{\eraseForSub{\Psi'}}{K'_k} \dashv \Gamma'_1
        \\
        \Gamma_1 \vdash A_k : \subNorm{\sigma}{\eraseForSub{\Psi}}{K_k} \simeq_n A'_k : \subNorm{\sigma}{\eraseForSub{\Psi'}}{K'_k} \dashv \Gamma'_1
        & \ctxMerge{\Gamma_1}{\Gamma_2} \vdash \sigma : \Psi \simeq_n \sigma' : \Psi' \dashv \ctxMerge{\Gamma'_1}{\Gamma'_2}
      \end{array}
      }
      \\[0.75em]
      \infer
      {\ctxMerge{\Gamma_1}{\Gamma_2} \vdash \subCons{\sigma}{\alpha{\mapsto}A_k} : \ctxCons{\Psi}{\alpha{:}K_k} \simeq_n \subCons{\sigma'}{\alpha{\mapsto}A'_k} : \ctxCons{\Psi'}{\alpha{:}K'_k} \dashv \ctxMerge{\Gamma'_1}{\Gamma'_2}}
      {
      \begin{array}{lll}
        \Gamma_1 \geq k \not\geq n
        & \Gamma'_1 \geq k
        \quad \vdash \Gamma_1 \simeq_n \Gamma'_1
        \\
        \Gamma_1 \vdash \subNorm{\sigma}{\eraseForSub{\Psi}}{K_k} \kindCls
        & \Gamma'_1 \vdash \subNorm{\sigma}{\eraseForSub{\Psi'}}{K'_k} \kindCls
        \\
        \Gamma_1 \vdash A_k : \subNorm{\sigma}{\eraseForSub{\Psi}}{K_k}
        & \Gamma'_1 \vdash A'_k : \subNorm{\sigma}{\eraseForSub{\Psi'}}{K'_k}
        & \ctxMerge{\Gamma_1}{\Gamma_2} \vdash \sigma : \Psi \simeq_n \sigma' : \Psi' \dashv \ctxMerge{\Gamma'_1}{\Gamma'_2}
      \end{array}
      }
      \\[1em]
      \infer
      {\ctxMerge{\Gamma_1}{\Gamma_2} \vdash \subCons{\sigma}{x{\mapsto_k}e} : \ctxCons{\Psi}{x{:}A_k} \simeq_n \subCons{\sigma'}{x{\mapsto_k}e'} : \ctxCons{\Psi'}{x{:}A'_k} \dashv \ctxMerge{\Gamma'_1}{\Gamma'_2}}
      {
      \begin{array}{ll}
        \Gamma_1 \geq k \geq n
        \quad \Gamma'_1 \geq k
        \quad \vdash \Gamma_1 \simeq_n \Gamma'_1
        \\
        \Gamma_1 \vdash \subNorm{\sigma}{\eraseForSub{\Psi}}{A_k} \simeq_n \subNorm{\sigma}{\eraseForSub{\Psi'}}{A'_k} \dashv \Gamma'_1
        & \vdash \Gamma_2 \simeq_n \Gamma'_2
        \\
        \Gamma_1 \vdash e : \subNorm{\sigma}{\eraseForSub{\Psi}}{A_k} \simeq_n e' : \subNorm{\sigma}{\eraseForSub{\Psi'}}{A'_k} \dashv \Gamma'_1
        & \Gamma_2 \vdash \sigma : \Psi \simeq_n \sigma' : \Psi' \dashv \Gamma'_2
      \end{array}
      }
      \\[0.75em]
      \infer
      {\ctxMerge{\Gamma_1}{\Gamma_2} \vdash \subCons{\sigma}{x{\mapsto_k}e} : \ctxCons{\Psi}{x{:}A_k} \simeq_n \subCons{\sigma'}{x{\mapsto_k}e'} : \ctxCons{\Psi'}{x{:}A'_k} \dashv \ctxMerge{\Gamma'_1}{\Gamma'_2}}
      {
      \begin{array}{lll}
        \Gamma_1 \geq k \not\geq n
        & \Gamma'_1 \geq k
        \quad \vdash \Gamma_1 \simeq_n \Gamma'_1
        \\
        \Gamma_1 \vdash \subNorm{\sigma}{\eraseForSub{\Psi}}{A_k}
        & \Gamma'_1 \vdash \subNorm{\sigma}{\eraseForSub{\Psi'}}{A'_k}
        & \vdash \Gamma_2 \simeq_n \Gamma'_2
        \\
        \Gamma_1 \vdash e : \subNorm{\sigma}{\eraseForSub{\Psi}}{A_k}
        & \Gamma'_1 \vdash e' : \subNorm{\sigma}{\eraseForSub{\Psi'}}{A'_k}
        & \Gamma_2 \vdash \sigma : \Psi \simeq_n \sigma' : \Psi' \dashv \Gamma'_2
      \end{array}
      }
    \end{array}
  \]
  \caption{Equivalence over mode \(n\) on substitutions}
  \label{fig:equiv-substitutions}
\end{figure}

Note that these equivalence relation rules mirror the structure of the well-formedness/typing
rules, but they disregard equivalences between kinds/types/terms that are
well-formed/well-typed at inaccessible modes from \(n\). Thus, we omit
equivalence rules for type and term constructs that do not affect modes.

To illustrate how these equivalences mirror well-formedness/typing depending on
modes, consider the following rule:
\[
  \infer
  {\Gamma \vdash \kiCtxUp{k}{l}{\Psi}{K_l} \simeq_n \kiCtxUp{k}{l}{\Psi'}{K'_l} \dashv \Gamma'}
  {k > \Psi \geq l \geq n
    \qquad \vdash \ctxCons{\Gamma}{\Psi} \simeq_n \ctxCons{\Gamma'}{\Psi'}
    \qquad \ctxCons{\Gamma}{\Psi} \vdash K_l \simeq_n K'_l \dashv \ctxCons{\Gamma'}{\Psi'}}
\]
Here, we repeat the structure of the rule of the well-formedness check for
\(\kiCtxUp{k}{l}{\Psi}{K_l}\) by replacing well-formedness with equivalence,
except that it comes with one extra condition \(\l \geq n\).
This condition tells us \(K_l\) and \(K'_l\) remain in an accessible mode.

Conversely, if \(l \not\geq n\),
\[
  \infer
  {\kiCtxUp{k}{l}{\Psi}{K_l} \simeq_n \kiCtxUp{k}{l}{\Psi'}{K'_l}}
  {k > \Psi \geq l \not\geq n
    \qquad \vdash \ctxCons{\Gamma}{\Psi} \simeq_n \ctxCons{\Gamma'}{\Psi'}
    \qquad \ctxCons{\Gamma}{\Psi} \vdash K_l \kindCls
    \qquad \ctxCons{\Gamma'}{\Psi'} \vdash K'_l \kindCls}
\]
we check only the well-formedness of \(K_l\) and \(K'_l\) instead of
their equivalence as \(l\) is inaccessible from \(n\). In other words,
we treat such well-formed/typed objects in an inaccessible mode are always
equivalent.

These equivalence relations satisfy the following key lemma (\Cref{lem:diamond-modulo-equiv}).
In the statement, \(\stepto^*\) and \(\refinestepto{l}^*\) are the
reflexive-transitive closures of \(\stepto\) and \(\refinestepto{l}\).

\Needspace{7\baselineskip}
\begin{samepage}
\begin{lemma}[Diamond Modulo Equivalence Over a Mode]\label{lem:diamond-modulo-equiv}For \(\Gamma \vdash e : A_k \simeq_n e' : A'_k \dashv \Gamma'\) where \(k \geq n\),
  \begin{itemize}
  \item If \(e \stepto e_1\) and \(e' \stepto e'_1\) and \(e \stepto^* v\) and \(e' \stepto^* v'\),
    then there exist \(e_2\) and \(e'_2\) such that \(e_1 \stepto^* e_2\) and \(e'_1 \stepto^* e'_2\) and \(e_2 \simeq_n e'_2\).
  \item If \(e \refinestepto{l} e_1\) and \(e' \refinestepto{l} e'_1\) and \(e \refinestepto{l}^* a^l\) and \(e' \refinestepto{l}^* a'^l\),
    then there exist \(e_2\) and \(e'_2\) such that \(e_1 \refinestepto{l}^* e_2\) and \(e'_1 \refinestepto{l}^* e'_2\) and \(e_2 \simeq_n e'_2\).
  \end{itemize}
\end{lemma}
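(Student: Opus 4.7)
The plan is to proceed by mutual induction on the two reduction sequences $e \stepto^* v$ and $e' \stepto^* v'$ (and their template counterparts), with an inner case analysis on the first steps $e \stepto e_1$, $e' \stepto e'_1$ and the equivalence derivation $\Gamma \vdash e : A_k \simeq_n e' : A'_k \dashv \Gamma'$. The crucial structural fact is that every equivalence rule at an accessible mode ($\geq n$) mirrors the corresponding typing/well-formedness rule, so two equivalent terms whose outermost mode is $\geq n$ must share the same head constructor and equivalent principal subcomponents. Since the hypothesis gives $k \geq n$, the outermost constructors of $e$ and $e'$ agree, and the structure of both reduction relations matches this decomposition.

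Before the main induction, I would establish an auxiliary \emph{substitution lemma for equivalence}: if $\Gamma \vdash \sigma : \Psi \simeq_n \sigma' : \Psi' \dashv \Gamma'$ and $\ctxCons{\Gamma}{\Psi} \vdash e : A_k \simeq_n e' : A'_k \dashv \ctxCons{\Gamma'}{\Psi'}$, then $\Gamma \vdash \sub{\sigma}{e} \simeq_n \sub{\sigma'}{e'} \dashv \Gamma'$ (with analogous statements for kinds and types). This is proved by induction mirroring \Cref{lem:substitution}, exploiting that hereditary substitution is defined on dependency-free skeletons and that the equivalence rules are closed under taking well-typed parts at inaccessible modes. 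With this in hand, the congruence cases of the main induction go through directly: if both $e \stepto e_1$ and $e' \stepto e'_1$ are congruences that descend into equivalent subterms, the IH joins the subterm reducts, and reassembly preserves equivalence. If both steps are the same head reduction (both \(\forceSuspRed\), both \(\loadStoreRed\), both $\beta$, or both type-$\beta$), matching heads from the equivalence derivation plus the substitution lemma yield $e_1 \simeq_n e'_1$ immediately, so one can take $e_2 = e_1$ and $e'_2 = e'_1$.

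The remaining cases are the asymmetric ones, where one side performs a head reduction while the other is still reducing a subterm (e.g., $e$ has already evaluated the function position of an application, while $e'$ has not). Here I would use the termination hypothesis $e' \stepto^* v'$ to drive the lagging side forward: determinism of the small-step relation forces the remaining congruence steps on $e'$ to eventually unblock the same head redex, at which point both sides have performed the analogous reduction and the substitution lemma closes the diamond. The template-reduction part of the statement is handled in lockstep, using the mode guards in the $\refinestepto{m}$ rules: at an inaccessible mode $l \not\geq m$, reductions stay inside $\refinestepto{m}$ and simply propagate equivalence at inaccessible modes (which is nearly trivial, since the equivalence rules demand only well-typedness there); at an accessible mode ($n' \geq m$), the rules hand off to $\stepto$, and the term-level part of the IH applies.

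The main obstacle will be the interaction between $\stepto$ and $\refinestepto{m}$ under nested $\tmCtxSuspElab{k}{l}{\hat\Psi}{-}$ constructs where the ambient mode shifts. In this situation, two templates that are equivalent can harbor arbitrarily different inaccessible subterms whose evaluation status differs, and one must argue that splicing these inaccessible fragments via \(\forceSuspRed\) at a mode $n' \geq m$ still yields equivalent reducts. Getting the induction measure right here — lexicographic on the pair of reduction-lengths to $v$ and $v'$, with a secondary tie-break on the structure of the equivalence derivation — and threading \Cref{thm:preservation} through each case to maintain the well-typedness premises of $\simeq_n$ is where most of the technical care will be concentrated.
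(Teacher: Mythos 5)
Your plan matches the paper's proof in all essentials: the paper argues by induction with case analysis on the equivalence derivation, relies on the fact that weak normal forms cannot step further (which is what your determinism/catch-up argument for the asymmetric cases amounts to), and explicitly requires the same auxiliary ingredient you identify up front, namely a variant of \Cref{lem:substitution} for the equivalence relations. The only divergence is the bookkeeping of the induction measure (you put reduction length first with the equivalence derivation as tie-break, the paper leads with the equivalence derivation), which does not change the substance of the argument.
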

\end{samepage}

\begin{proof}
  By induction on the equivalence derivation and the fact that we cannot step further from a weak normal form. This proof also requires a variant of \Cref{lem:substitution} for the equivalence relations.
\end{proof}

This lemma says that if two well-typed terms
(under two equivalent contexts and types over a mode \(n\))
are equivalent over \(n\) then we can reach two equivalent terms over \(n\)
after evaluating both sides. With this lemma, we can immediately prove
the next main theorem.

\begin{theorem}[Mode Safety]\label{thm:mode-safety}
  For \(k \geq n\), if \(\Gamma \vdash e : A_k \simeq_n e' : A'_k \dashv \Gamma'\) and \(e \stepto^* v\) and \(e' \stepto^* v'\),
  \(\Gamma \vdash v : A_k \simeq_n v' : A'_k \dashv \Gamma'\)
\end{theorem}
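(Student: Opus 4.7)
The plan is to induct on the combined length of the reduction sequences $e \stepto^* v$ and $e' \stepto^* v'$, using \Cref{lem:diamond-modulo-equiv} as the engine that closes diamonds and \Cref{thm:preservation} to keep everything well-typed along the way. The base case is trivial: if both lengths are zero, then $e = v$ and $e' = v'$, and the conclusion is the hypothesis itself.

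For the inductive step, first consider the clean case where both $e$ and $e'$ can step, so $e \stepto e_1$ and $e' \stepto e'_1$. Applying \Cref{lem:diamond-modulo-equiv} yields $e_2, e'_2$ with $e_1 \stepto^* e_2$, $e'_1 \stepto^* e'_2$, and $e_2 \simeq_n e'_2$ (with contexts and types preserved by \Cref{thm:preservation}). The reduction rules in \Cref{fig:small-step} are deterministic at the outermost shape, so the unique reduction $e \stepto^* v$ must pass through $e_1$ and hence through $e_2$, giving $e_2 \stepto^* v$; symmetrically $e'_2 \stepto^* v'$. The combined length of these remaining reductions is strictly smaller, so the induction hypothesis delivers $v \simeq_n v'$.

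The delicate case is when one side is already a weak normal form while the other still steps. Here I would argue that, because the equivalence rules mirror the typing rules and force matching top-level shapes at accessible modes, any step of the non-normal side must be a congruence into a subterm at a mode $m \not\geq n$. Since equivalence over $n$ imposes no syntactic constraint on such subterms, the equivalence with the already-normal side is preserved by that step; I then iterate along the remaining reduction until both sides are weak normal forms. A cleaner packaging is to first strengthen \Cref{lem:diamond-modulo-equiv} to admit empty steps on one side, after which \Cref{thm:mode-safety} falls out immediately by instantiating $e_1 = v$ and $e'_1 = v'$.

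The main obstacle is precisely this mixed case: justifying that every ``extra'' step on the more-reducible side really lies in an inaccessible subterm requires a case analysis paralleling the one behind \Cref{lem:diamond-modulo-equiv}, checking each equivalence rule in \Cref{fig:equiv-contexts,fig:equiv-kinds,fig:equiv-types,fig:equiv-terms,fig:equiv-substitutions} against the congruence rules for both $\stepto$ and $\refinestepto{l}$, and this is where most of the bookkeeping will live.
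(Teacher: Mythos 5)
Your proposal is correct and follows essentially the same route as the paper, whose proof is the one-line ``by applying \Cref{lem:diamond-modulo-equiv} and the fact that we cannot step further from a weak normal form'': you iterate the diamond lemma along the (deterministic) reduction sequences and terminate because weak normal forms are stuck. The mixed case you flag as delicate is precisely what the paper's appeal to stuckness of weak normal forms is glossing over, so your elaboration of it (extra steps on the non-normal side landing in subterms at modes inaccessible from \(n\), where equivalence only demands well-typedness, maintained via \Cref{thm:preservation}) is a faithful unfolding of the intended argument rather than a departure from it.
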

\begin{proof}
  By applying \Cref{lem:diamond-modulo-equiv} and the fact that we cannot step further from a weak normal form.
\end{proof}

\section{Implementation and Case Studies}\label{sec:implementation}

\subsection{Type Checking in \elevator}
We have implemented a prototype\footnote{\url{https://github.com/Ailrun/adjoint-meta-impl}}
for \elevator. The core type checking algorithm used in the implementation is an 
extention of the algorithmic typing rules for simply typed adjoint
natural deduction~\cite{Jang:FSCD24} to support polymorphism. In particular:

\begin{itemize}
\item The implementation allows non-normal types in type signatures/annotations.
  These are normalized based on the normalization by hereditary substitution.
\item It also includes data type declarations.
\item Depending on mode structure, the type checker allows general recursions or
  primitive recursions on a data type.
\end{itemize}

This type checking algorithm is parametrised by a mode structure.
One can choose any data type \lstinline!m! as a mode structure as
long as it comes with the following functions:
\begin{lstlisting}
  (<=!!) :: m -> m -> Bool
  modeSig :: m -> ElMdSt -> Bool
\end{lstlisting}
Here, \lstinline{(<=!!)} is the preorder on \lstinline!m!,
and \lstinline!modeSig! is the mode signature function to decide
whether a structural rule (whose type is \lstinline!ElMdSt!
(\elevator Mode Structural rules)) is available in a given mode.
% \lstinline!modePolyTime! decides whether we should count
% the use of constructors to guarantee poly-time execution
% \`a la Hofmann~\cite{Hofmann:ICC99}

\subsection{Interpreter / Runtime Engine in \elevator}
Evaluation in the implementation is based on the abstract machine defined
by~\citet{Jang:FSCD24}, but we adjust the machine with the template reduction
we have in this paper. More specifically, we distinguish two states for
the evaluation of a term and reduction of a template in the machine, and
each state has different rules corresponding to our operational semantics.
Moreover, as we interpret a term to get a final value, our implementation skips
intermediate states and proceeds to the final state as in big-step semantics.

\subsection{Example : Updating an Array In-place}
To highlight the versatility and relevance of \elevator to existing
programming practice, we extend our naive
implementation of linear lists from
\Cref{subsec:motivation-poly-code-gen} to modelling in-place updates
in arrays. 
In this example, we have the mode specification where
we have 3 modes \(\mC\), \(\mP\), and \(\mGF\), which satisfy
\(\mC > \mP > \mGF\). \(\mC\) and \(\mP\) both allow
all substructural rules, but \(\mGF\) allows no substructural rules.
This allows us to define a template depending on the type variable
\(\alpha\) living in \(\mP\), so that we can use \(\alpha\)
to describe the type of the mapping function, which will be used
multiple times.
Comparing to M\oe{}bius~\cite{Jang:POPL22}, the mode \(\mP\) is
like level 1, and \(\mC\) is like level 2. The important difference
is that here our ``level 0'', i.e.\ mode \(\mGF\), does not allow
weakening/contraction.

We use this linearity of \(\mGF\) to encode side-effects
while keeping typical optimizations based on referential transparency
(such as common subexpression elimination) intact. This approach
is based on~\cite{Bernardy:POPL18}, which uses linear types to
handle, for example, array mutation in a referentially transparent setting.

Suppose that we have the following interface:
\begin{lstlisting}
  read : (\alpha: Type$_\mGF$^|$^\mP_\mGF$) -> Nat$_\mGF$ -o f[\alpha]f Array$_\mGF$ -o (f[\alpha]f^|$^\mP_\mGF$v|$^\mP_\mGF$, f[\alpha]f Array$_\mGF$)
  write : (\alpha: Type$_\mGF$^|$^\mP_\mGF$) -> Nat$_\mGF$ -o f[\alpha]f^|$^\mP_\mGF$v|$^\mP_\mGF$ -o f[\alpha]f Array$_\mGF$ -o f[\alpha]f Array$_\mGF$
\end{lstlisting}
Here, \lstinline!f[\alpha]f Array$_\mGF$! represent a tag for an array,
which is used by these interface functions to access the array.
The \lstinline!read \alpha n xs! function reads \lstinline!n!-th element
in the array tagged by \lstinline!xs!, and returns it together with a new
tag for future access to the array. Note that the type of this element is
\lstinline!f[\alpha]f^|$^\mP_\mGF$v|$^\mP_\mGF$!. Indeed, if we read
for example, 0-th element twice, we can use that element twice. Thus to
encode this multi-use correctly, the element type from \lstinline!read!
should come with \lstinline!^|$^\mP_\mGF$v|$^\mP_\mGF$!.
The \lstinline!write \alpha n v xs! function writes \lstinline!v! to
\lstinline!n!-th place in the array tagged by \lstinline!xs!, and returns
a new tag for the array.

With this interface, we can build a pointer to a template that maps each
entry in an array in the in-place manner.
\begin{lstlisting}
mapArrayHelper : Nat$_\mGF$
               -o [I:Nat$_\mGF$^|$^\mP_\mGF$,              % starting position I
                   \alpha:Type$_\mGF$^|$^\mP_\mGF$,        
                   xs:f[\alpha]f Array$_\mGF$,             
                   f:f[\alpha]f^|$^\mP_\mGF$ -> f[\alpha]f^|$^\mP_\mGF$  
                  |- f[\alpha]f Array$_\mGF$]^|$^\mC_\mGF$v|$^\mC_\mGF$
mapArrayHelper n =
  match n with
  | 0 => store t[I, \alpha, xs, f . xs]t
  | n => load YS = mapArrayHelper (n - 1) in
         store t[I, \alpha, xs, f .
           let (x, xs1) = read \alpha f[I]f xs in
           load X = x in
           let xs2 = write \alpha f[I]f (store (f X)) xs1 in
           f[YS]f@(t[1 + f[I]f]t, \alpha, xs2, f)]t

mapArray : Nat$_\mGF$
         -o [\alpha:Type$_\mGF$^|$^\mP_\mGF$, xs:f[\alpha]f Array$_\mGF$, f:f[\alpha]f^|$^\mP_\mGF$ -> f[\alpha]f^|$^\mP_\mGF$ |- f[\alpha]f Array$_\mGF$]^|$^\mC_\mGF$v|$^\mC_\mGF$
amap n =
  load YS = mapArrayHelper n in
  store t[\alpha, xs, f . f[YS]f@(t[0]t, \alpha, xs, f)]t
\end{lstlisting}
In this program, \lstinline!mapArrayHelper n! generates a program that maps \(n\) elements
in \lstinline!xs! starting from its \lstinline!I!-th element using mapping function
\lstinline!f!. \lstinline!mapArray n! generates a program by instantiating the program
generated by \lstinline!mapArrayHelper n! with \lstinline!0! so that it covers first \(n\)
elements. When we call \lstinline!mapArray! with the expected size of an input array, the
generated program will map all elements in the array.

\section{Related Work}
\paragraph{Metaprogramming Systems Based on S4 Modal Logic}
More than two decades ago, \citet{Davies:ACM01} observe that the necessity (box) modality
\(\tyBoxSymb\) in the modal logic S4 can be used to distinguish generated
code fragments from the program generator itself. This key observation provides
a logical foundation for quasi-quotation systems. Since then, this
system has been extended to various type constructs and metaprogramming
concepts, including open code~\cite{Nanevski:TOCL08}, System-F style
polymorphism and pattern matching~\cite{Jang:POPL22}, and dependent
types~\cite{Boespflug:LFMTP11,Pientka:LICS19,Hu:JFP23,Hu:ESOP24}.

Our system borrows the idea of contextualized polymorphic types from
the contextual box types~\cite{Nanevski:TOCL08} and its polymorphic
version~\cite{Jang:POPL22} but provides a more fine-grained control
over memory \--- whether we want to use a memory entry unrestrictedly,
at-most-once, at-least-once, or exactly-once. We also provide an explicit
statement about memory independence \Cref{thm:mode-safety} (Mode Safety),
where \citet{Jang:POPL22} do not state such a guarantee on their
levels (which corresponds to our modes). On the other hand,
pattern matching on code as in their system remains as future work for
our system.

\paragraph{Adjoint Logic}
\citet{Benton:CSL94} first observed that a pair of adjoint functors can be used
as modalities switching between two different logics, linear and intuitionistic.
\citet{Benton:CSL94} also provides a syntactic theory in both sequent calculus
and natural deduction forms. This system has the advantage that it preserves
not only inhabitedness, but also the term structures and the reductions of
its sublogics (linear logic and intuitionistic logic).
\citet{Reed09} extends \citet{Benton:CSL94}'s sequent calculus
to a multi-logic version that connects any number of logics based on
a preorder between logics (modes) and coined the name ``adjoint logic''.
In this generalization, \citet{Reed09} also observes that the necessity
modality \(\tyBoxSymb\) can be decomposed into an adjoint pair and
seamlessly embedded into adjoint logic. This observation opens up the 
possibility to apply adjoint logic to metaprogramming in the spirit
of~\citet{Davies:ACM01}. Our work makes this observation concrete with the design of \elevator.

\citet{Licata:LFCS16,Licata:FSCD17} provide a category-theoretical
description of a adjoint sequent calculus together with a generalization to
general 2-categories of modes (of which a preorder of modes is an instance).
This description includes categorical semantics as well as an equational
theory of proof terms in the adjoint sequent calculus. While this allows us
to describe even BI (bunched implications), modes in their system are too
general to be easily checked in practice. Thus, in applications of adjoint
logic to message-passing style concurrency and session types,
 \citet{Pruiksma:JLAMP21} and \citet{Pruiksma24phd}
go back to preorder-based systems. In \elevator and \citet{Jang:FSCD24}, we also use a pre-order on modes
for practical reasons.

\elevator is an extension of adjoint natural
deduction~\cite{Jang:FSCD24}, which is a preorder extension of
\citet{Benton:CSL94}'s natural deduction calculus. In \citet{Jang:FSCD24},
the authors show that the adjoint natural deduction system can be clearly
understood as a functional programming language with memory management
guarantees. They provide an algorithmic
typing algorithm and an abstract machine semantics for the
system. Moreover, they prove
some properties in terms of memory management such as garbage-freedom or
erasure (that we can ignore values in inaccessible memory regions) for
a specific class of types (that corresponds to positive propositions in logic).
\elevator extends this work by \citet{Jang:FSCD24} to the polymorphic
setting and defines an operational
semantics that is compatible with existing metaprogramming
systems. Moreover, we generalize the up-shift adjoint modality to
characterize open templates. 
% This allows us to generate code as intended.
%Our system supports polymorphism and defers evaluation based on
%modes so that we can keep syntactic equality between templates.
%Moreover, this modality can express open templates as
%it is contextualized.

\paragraph{Other Typed Metaprogramming Systems}
One of the most notable typed metaprogramming systems is
MetaML~\cite{Taha:TCS00,Taha:POPL03}. The system allows manipulation of open code
and supports type inference. Their approach is based on environment classifiers,
which are names given to typing contexts. \citet{Kiselyov:APLAS16}
extends this environment classifier
approach with polymorphism and side-effects. As the classifiers abstract out details of contexts,
they reason about environments and their extensions only abstractly.
Therefore, in its nature, in their systems, one cannot reason about specific
type/term variables occurring in a code fragment. This seems problematic in terms of
System-F style polymorphism%and type template
, as we cannot annotate a specific
free type variable to be available throughout multiple modes. For example,
the type variable \(\beta\) of \lstinline!mapLinMeta! in
\Cref{subsec:motivation-poly-code-gen} requires such an annotation in order to
be available to describe both garbage-free memory and unrestricted memory.

\citet{Kim:POPL06} characterizes open code using extensible records for
the typing environment of code. As their goal is to extend ML with the macro
system of Lisp, they intentionally allow lexically ill-scoped code fragments.
Also, their extensible records are treated symbolically and do not allow
the weakening of an open code fragment.

\citet{Parreaux:SPLASH17,Parreaux:POPL18} describe a system called Squid,
a metaprogramming framework for Scala. Squid supports code generation and
code analysis using a primitive. This allows a programmer to write a
code transformer without extra traversal over code fragments. However,
although they have code generators that themself are polymorphic,
their system cannot generate a polymorphic code fragment, as their contexts
do not have type variables.

\citet{Xie:POPL22} provide a formal description for Typed Template Haskell.
They combine polymorphic metaprogramming with type class constraint resolution,
which is one of the key features of Haskell. However, this system has not
been integrated with Linear Haskell~\cite{Bernardy:POPL18}.

In general, these systems do not support polymorphism and memory management
together.

% \paragraph{Combination of S4 Modal Logic and Substructural Logics}

% \citet{Atkey:LICS18} describes a modality parametrized with quantity that allows
% quantitative usage tracking for variables, not just one-time/unrestricted usage.
% \citet{Orchard:ICFP19,Choudhury:POPL21,Benjamin:ESOP21} add quantitative
% tracking for type-level usage as well. However, their systems do not support
% a notion of a modality for code generation. Also, we believe that we can
% extend \elevator with a quantity/grade for each mode such that
% \elevator would support quantitative/graded modalities.

% \cite{Orchard:ICFP19,Choudhury:POPL21} % Same for these

\paragraph{Other Substructural Systems Based on Multiple Modes}
Recently a number of multi-modal systems~\cite{McBride16,Atkey:LICS18,Moon:ESOP21,Choudhury:POPL21,Wood22esop,Abel:ICFP23}
supporting both S4 necessity modal type and substructural types
have been proposed. These systems largely differ in their expressive power
and their intended applications but share one common property in terms of
S4 necessity modal type; they evaluate under the modality. This evaluation
under the necessity modality breaks the distinction between code and
programs and makes them not fit well for metaprogramming.

% Cite "Oxidizing OCaml with Modal Memory Management"
% This paper does not combine metaprogramming, but they support sub-moding between modalities.
% or should we mention it...? It is too far from metaprogramming in the sense that
% they cannot model S4 let alone proper metaprogramming.
% \cite{Gratzer:LICS20,Gratzer21,Stassen:TYPES22} % Are they really relevant? Maybe not...

\section{Conclusion}
This paper describes \elevator{}, an adjoint foundation for multi-staged
metaprogramming that, for the first time, provides memory management guarantees
for typed polymorphic metaprogramming. Furthermore, this foundation, even
without its memory management guarantees, brings a more fine-grained distinction
between code and program, which allows us to implement more efficient code
generators. This foundation realizes this by generalizing simply typed adjoint
type systems~\cite{Pruiksma:JLAMP21,Pruiksma24phd,Jang:FSCD24} to a
contextualized, polymorphically-typed system and developing a small-step
semantics that is compatible with metaprogramming.

This paper also introduces a new notion of safety, called ``mode safety'',
which states that memory accessibility specified as a preorder on memory regions
is respected during evaluation. We prove that \elevator{} satisfies  mode
safety as well as the type safety and substructurality of variable references.

One possible extension of this system is to add pattern matching on
templates similar to ~\citet{Jang:POPL22}. In fact,
being able to syntactically compare normal templates is essential to be
compatible with~\cite{Jang:POPL22} and, in general, with pattern matching on
templates as it allows programmers to easily predict the matching branch.
Implementing this extension would introduce the first polymorphic system that
allows both memory safety guarantees and analysis over code fragments.

Mode inference and polymorphism are also useful extensions for practical
programming. By mode inference, we can remove extraneous mode annotations
on types for more compact and readable code. By mode polymorphism (such as
in~\cite{Abel:ICFP20}), we can reduce boilerplates in a program further by
sharing code across multiple modes.

% \bpcomment{
%   \begin{itemize}
%   \item mode inference?
%   \item compilation into adjoint system
%   \item $\ldots$
%   \end{itemize}
% }

\bibliography{ext.bib}

\appendix

\end{document}